\documentclass[letterpaper, 11pt, onecolumn]{article}
\pdfoutput=1 

\usepackage{lmodern,booktabs,authblk,tocloft}
\usepackage[margin=1in]{geometry}
\usepackage{graphicx}
\usepackage[linesnumbered,ruled,vlined]{algorithm2e}
\SetKwInput{KwInput}{Input}                
\SetKwInput{KwOutput}{Output}
\SetKwInOut{KwPromise}{Promise}
\SetKwFor{RepTimes}{repeat}{times}{end}
\usepackage{complexity}
\usepackage{amsmath}
\usepackage{fullpage}
\usepackage{amsfonts}
\usepackage{amssymb}
\usepackage{dsfont}
\usepackage{amsthm}
\usepackage{thm-restate}
\usepackage{tensor}
\usepackage[dvipsnames]{xcolor}
\usepackage{comment}
\usepackage{mathtools}
\usepackage{physics}
\usepackage{bbold}
\usepackage{appendix}
\usepackage{multirow}
\usepackage[ruled,vlined]{algorithm2e}
\usepackage[colorlinks]{hyperref}
\hypersetup{
pdfstartview={FitH},
pdfnewwindow=true,
colorlinks=true,
linkcolor=MidnightBlue,
citecolor=MidnightBlue,
filecolor=MidnightBlue,
urlcolor=MidnightBlue}
\usepackage{enumitem}
\usepackage[capitalize]{cleveref}

\makeatletter
\DeclareRobustCommand\bfseries{%
  \not@math@alphabet\bfseries\mathbf
  \fontseries\bfdefault\selectfont\boldmath}
\makeatother

\def\Tr{\mathrm{Tr}}

\def\dqc1{\textsc{DQC1}}
   
\newcommand{\F}{\mathbb F}
\renewcommand{\C}{\mathbb C}
\renewcommand{\tr}{\operatorname{tr}}
\newcommand{\End}{\operatorname{End}}
\newcommand{\PSL}{\operatorname{PSL}}
\renewcommand{\SL}{\operatorname{SL}}
\newcommand{\HS}{\mathrm{HS}}
\newcommand{\PGL}{\operatorname{PGL}}

\newtheorem{theorem}{Theorem}[section]
\newtheorem{lemma}[theorem]{Lemma}

\newtheorem{remark}[theorem]{Remark}
\newtheorem{corollary}[theorem]{Corollary}

\title{Ramanujan quantum expanders from the Weil representation}

\author{Siddhartha Jain\\
\vspace{-0.5em}\small\itshape The University of Texas at Austin}
\date{\today}

\begin{document}

\maketitle

\begin{abstract}
For every odd prime power $q$ and $D=q+1$, we construct an infinite family of Ramanujan quantum expanders of degree $D$. The construction transfers Morgenstern's Ramanujan Cayley graphs on $\PSL_2(\F_{p})$ for $p$ which is an even power of $q$, through the odd irreducible subrepresentation of the Weil representation of $\SL_2(\F_{p})$. For a quantum expander of dimension $N$, our implementation uses $O(\log^2 N)$ elementary gates, and $O(\log N)$ ancilla qudits, using a fixed finite gate set depending on $q$. An advantage compared to the previous work of Iyer, Jain, Jordan, and Somma ({\footnotesize FOCS 2026}) is that assuming the quantum circuit is implemented exactly, we satisfy the $2\sqrt{D-1}/D$ singular value bound exactly without any additive error. 
\end{abstract}


Quantum expanders are analogs of \emph{random walks} over classical expander graphs. They are unital channels for which the second singular value is small.
We study mixed-unitary channels, which choose one of $D$ unitaries uniformly and apply it to the input state, where $D$ is called the degree.
For even $D\geq4$, Hastings showed that choosing $D/2$ independent Haar-random unitaries and pairing each unitary with its adjoint gives channels whose second singular value converges
to $2\sqrt{D-1}/D$ as the dimension grows~\cite{hastings07}. The challenge is to construct these channels explicitly, with efficient circuits.

Harrow observed that a classical Cayley expander on a group $G$ with symmetric generating set $S$, together with an $N$-dimensional irreducible representation $\mu$ of $G$, gives the quantum channel
\begin{equation}\label{eq:harrow-transfer-channel}
\Phi_\mu(X):=\frac1{|S|}\sum_{s\in S}\mu(s)X\mu(s)^*.
\end{equation}
The second singular value of the channel in \cref{eq:harrow-transfer-channel} is no larger than the nontrivial spectral norm of the classical walk~\cite{Harrow2008QuantumExpanders}. So we immediately get many quantum channels that satisfy the analytic bound, but the crux is implementing $\mu(s)$ in time polynomial in $\log N$. Harrow posed the remaining gap to be to design efficient quantum Fourier transforms (QFT) over groups with Ramanujan Cayley expanders.

But he also noted that one does not require a QFT. A nonabelian quantum Fourier transform over a group $G$ with Ramanujan Cayley expanders is definitely sufficient: it decomposes the regular representation into all of its irreducible blocks, after which one can read off $\mu(s)$. However, for $G=\PSL_2(\F_p)$ no efficient group QFT is known, and this is the obstruction Harrow identified for the LPS-type example~\cite{lubotzky1988ramanujan,Harrow2008QuantumExpanders}.
But the group QFT solves a much larger problem than we need. It constructs every irreducible representation at once, whereas the quantum expander construction requires only one large irreducible representation.

For $\PSL_2(\F_p)$, we use the odd Weil representation.  It is obtained from a unitary action of $\SL_2(\F_p)$ on the $p$-dimensional Hilbert space $\mathcal K_p=\C^{\F_p}=\operatorname{span}\{\ket{x}:x\in\F_p\}$ whose computational basis is indexed by the elements of the finite field.  The elementary matrices generating $\SL_2(\F_p)$ have particularly simple actions on this basis: they apply a phase depending quadratically on $x$, relabel $\ket{x}$ by multiplying $x$ by a nonzero field element, or apply the Fourier transform over the additive group of $\F_p$~\cite{Gerardin1977,GurevichHadaniSochen2008}.
It remains to implement an arbitrary matrix in $\SL_2(\F_p)$.  No general group machinery is needed.  One can check that
\begin{equation}\label{eq:matrix-factorization}
\begin{aligned}
g=\begin{pmatrix}a&b\\c&d\end{pmatrix}
&=
\begin{pmatrix}1&a/c\\0&1\end{pmatrix}
\begin{pmatrix}0&1\\-1&0\end{pmatrix}
\begin{pmatrix}-c&0\\0&-c^{-1}\end{pmatrix}
\begin{pmatrix}1&d/c\\0&1\end{pmatrix},
&& c\neq 0,\\[1ex]
g=\begin{pmatrix}a&b\\0&a^{-1}\end{pmatrix}
&=
\begin{pmatrix}a&0\\0&a^{-1}\end{pmatrix}
\begin{pmatrix}1&b/a\\0&1\end{pmatrix},
&& c=0.
\end{aligned}
\end{equation}
Here we used $\frac{ad-1}{c} = b$. Thus, every $g\in\SL_2(\F_p)$ is a product of at most four matrices of the three forms appearing above.  Under the Weil representation these three forms become a quadratic phase, the finite-field Fourier transform, and a field scaling, respectively.  Hence, implementing an arbitrary generator reduces to implementing these three elementary quantum operations.

Two representation-theoretic facts make this implementation fit Harrow's transfer theorem. First, the odd parity subspace has dimension $N = (p-1)/2$ and is irreducible.
Second, each element of $\PSL_2(\F_p)$ has two lifts to $\SL_2(\F_p)$, and the corresponding Weil operators differ only by
a scalar phase on the odd subspace.
These phases disappear under conjugation and therefore do not affect
the channel.
With this in mind, we can state the argument succinctly. We can combine the Morgenstern classical Ramanujan Cayley graphs, the odd Weil representation which is one large directly implementable irreducible representation, and Harrow's transfer principle which implies the quantum spectral bound. With careful examination one can construct a $O(\log^2 N)$ size circuit for a Ramanujan quantum expander, with $\widetilde{O}(\log^5 N)$ classical preprocessing time. We defer the discussion of the details to the appendices, which assumes minimal background in quantum computing and representation theory.

\paragraph{Related work.} The work of Jeronimo, Mittal, Roy, and Wigderson constructed quantum expanders~\cite{OpAmp25}. Starting from a
constant-degree family with a constant gap, their construction achieves
contraction at most $\lambda$ with degree
$D=O(\lambda^{-(2+o_\lambda(1))})$, but it does not give the optimal fixed-degree bound we achieve here. The recent work using the quantum Hermite transform gets $\epsilon$-close to the Ramanujan bound in $\polylog (1/\varepsilon)$ time for degree 6~\cite{jain2025efficientquantumhermite,iyer2026efficientquantumcircuitshighdimensional}, but our circuit satisfies the bound without additive error.
While it is hard to imagine implementing such a quantum circuit exactly in practice, due to the Solovay–Kitaev theorem the overhead for implementing the circuit with $\varepsilon/2$ operator norm error is $O(\log^c (\log^2 N/\varepsilon))$ where $c\leq 1.441$~\cite{Kuperberg2023BreakingCubic}. This gives the additive error $\varepsilon$ in the singular value if we additionally approximate the unitaries on the odd subspace (\cref{sec:approximate-implementation}).
We do not know how to get this scaling with the approach of \cite{iyer2026efficientquantumcircuitshighdimensional}.

\section*{Open problems}

\begin{enumerate}
    \item Can we construct Ramanujan quantum expanders for all degrees?
    
    \item Can we construct fixed-degree Ramanujan quantum expanders for every sufficiently large Hilbert-space dimension, rather than only for the dimensions $N_k=(q^{2k}-1)/2$? The construction of~\cite{iyer2026efficientquantumcircuitshighdimensional} gives near-Ramanujan quantum expanders in every dimension.
\end{enumerate}

\subsection*{Acknowledgements}

The author thanks Aram Harrow for discussions about the exposition of this work. The author was supported by the Amazon AI fellowship. Generative AI was used to search the literature for an quantum-efficient representation of $\SL_2(\F_p)$, and it was also used by the author understand the Weil representation and aid with writing the appendices.

\appendix

\renewcommand*{\theHtheorem}{\thesection.\arabic{theorem}}
\renewcommand*{\theHlemma}{\theHtheorem}
\renewcommand*{\theHremark}{\theHtheorem}
\renewcommand*{\theHcorollary}{\theHtheorem}

\section{Quantum expansion and the transfer lemma}\label{sec:preliminaries}

We first define the contraction that a quantum expander must satisfy. We then prove the transfer lemma: a spectral bound for a classical Cayley walk gives the same bound for an irreducible projective representation. The proof compares norms directly, without decomposing a representation into irreducible blocks.

\paragraph{Parameters.}
Throughout the appendices, we fix an odd prime power $q=\ell^r$ and the degree $D=q+1$. We let $n$ range over the positive even integers and put
\[
    p=q^n=\ell^m,\qquad m=rn,\qquad N=\frac{p-1}{2}.
\]
Thus $q$ fixes the degree, while $n$ controls the dimension. The symbol $p$ denotes a field size, not necessarily a prime. We prove the spectral bound in \cref{sec:quantum-expander}. We give the circuits and their classical preprocessing in \cref{sec:quantum-circuit}.

\subsection{The contraction we need}

For a finite set $X$, we write $\C^X$ for the Hilbert space with orthonormal basis $\{|x\rangle:x\in X\}$. Thus $\C^{\F_p}$ is a register whose basis labels are field elements. For a finite-dimensional Hilbert space $\mathcal H$, let $\End(\mathcal H)$ be its space of linear operators. We use the \emph{Hilbert--Schmidt inner product} and norm
\[
    \langle X,Y\rangle_{\HS}:=\tr(X^*Y),\qquad
    \|X\|_{\HS}:=\sqrt{\tr(X^*X)}.
\]
The traceless operators form the subspace
\[
    \End_0(\mathcal H):=\{X\in\End(\mathcal H):\tr X=0\}.
\]
This is the orthogonal complement of the identity, since $\langle\mathds{1},X\rangle_{\HS}=\tr X$.

A \emph{density matrix} is a positive semidefinite operator of trace one. If $\dim\mathcal H=N$, the \emph{maximally mixed state} is $\omega_N:=\mathds{1}/N$. A degree-$D$ \emph{mixed-unitary channel} chooses one of $D$ unitaries uniformly and applies it:
\begin{equation*}
    \Phi(X):=\frac1D\sum_{s=1}^D U_sXU_s^*.
\end{equation*}
This map takes density matrices to density matrices and fixes $\omega_N$. We call it an $(N,D,\lambda)$ \emph{quantum expander} if
\begin{equation}\label{eq:density-contraction}
    \|\Phi(\rho)-\omega_N\|_{\HS}
    \leq\lambda\|\rho-\omega_N\|_{\HS}
\end{equation}
for every density matrix $\rho$. It is \emph{Ramanujan} when $\lambda\leq\lambda_D$, where
\begin{equation*}
    \lambda_D:=\frac{2\sqrt{D-1}}D.
\end{equation*}
We will use the equivalent bound on traceless operators.

\begin{lemma}[Spectral form of quantum expansion]\label{lem:spectral-form}
For a mixed-unitary channel $\Phi$, the least $\lambda$ for which \cref{eq:density-contraction} holds is
\begin{equation*}
    \sigma_2(\Phi):=
    \left\|\Phi\big|_{\End_0(\mathcal H)}\right\|_{\HS\to\HS}.
\end{equation*}
If $\sigma_2(\Phi)<1$, then $\omega_N$ is the unique fixed density matrix.
\end{lemma}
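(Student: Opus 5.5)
We would prove the two directions of the equality separately, relying on the fact that $\Phi$ is unital and trace preserving. Since $\Phi(\mathds 1)=\mathds 1$ and $\tr\Phi(X)=\tr X$, the map $\Phi$ sends $\End_0(\mathcal H)$ into itself. For a density matrix $\rho$ we have $\Phi(\rho)-\omega_N=\Phi(\rho-\omega_N)$, and $\rho-\omega_N$ is traceless. This immediately gives the inequality \cref{eq:density-contraction} with $\lambda=\sigma_2(\Phi)$, so $\sigma_2(\Phi)$ is an admissible constant.

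\textbf{Minimality.} We must show that no smaller $\lambda$ works, i.e.\ that density matrices see the full norm of $\Phi$ on traceless operators. The plan proceeds in three steps.

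\begin{enumerate}
\item Reduce to Hermitian inputs. The adjoint $\Phi^*(Y)=\frac1D\sum_s U_s^*YU_s$ also preserves $\End_0$. Moreover $\Phi$ commutes with taking adjoints, $\Phi(X^*)=\Phi(X)^*$. Hence $\Phi^*\Phi$ restricted to $\End_0$ is a positive operator that commutes with the antilinear involution $X\mapsto X^*$. Its top eigenspace is therefore invariant under this involution. Consequently, if $X$ is a top eigenvector, then one of $X+X^*$ or $i(X-X^*)$ is a nonzero Hermitian top eigenvector. So the supremum of $\|\Phi(X)\|_{\HS}/\|X\|_{\HS}$ over traceless $X$ is attained at a traceless Hermitian $X$.
\item Convert to a density matrix. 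Given such a Hermitian traceless $X\neq0$, set $\rho=\omega_N+tX$ with $t>0$ small enough that $\rho\succeq0$. This is possible because $\omega_N$ is strictly positive. Then $\tr\rho=1$, and $\rho-\omega_N=tX$.
\item Conclude. By homogeneity, $\|\Phi(\rho)-\omega_N\|_{\HS}/\|\rho-\omega_N\|_{\HS}=\|\Phi(X)\|_{\HS}/\|X\|_{\HS}=\sigma_2(\Phi)$. So any admissible $\lambda$ satisfies $\lambda\ge\sigma_2(\Phi)$.
\end{enumerate}

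\textbf{Uniqueness of the fixed point.} If $\Phi(\rho)=\rho$ for a density matrix $\rho$, then $\rho-\omega_N$ is traceless and fixed by $\Phi$. This gives $\|\rho-\omega_N\|_{\HS}\le\sigma_2(\Phi)\|\rho-\omega_N\|_{\HS}$. When $\sigma_2(\Phi)<1$ this forces $\rho=\omega_N$.

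The only delicate step is the reduction to Hermitian maximizers in step 1. The scaling in step 2 is routine. If the involution argument felt awkward, we would fall back on a direct bound. Write $X=A+iB$ with $A,B$ Hermitian and traceless. Then $\|\Phi(X)\|^2=\|\Phi(A)\|^2+\|\Phi(B)\|^2$, because $\Phi(A)$ and $\Phi(B)$ are Hermitian and $\|H+iK\|_{\HS}^2=\|H\|_{\HS}^2+\|K\|_{\HS}^2$ for Hermitian $H,K$. Together with $\|X\|^2=\|A\|^2+\|B\|^2$, this shows the ratio for $X$ is at most the larger of the ratios for $A$ and $B$.
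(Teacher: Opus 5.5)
Your proof is correct, and its outer structure matches the paper's: you pass between density matrices and Hermitian traceless operators via $\rho=\omega_N+tX$, then between Hermitian and arbitrary traceless operators. The difference is in the second reduction.

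The paper uses exactly your fallback, the identity $\|\Phi(A+iB)\|_{\HS}^2=\|\Phi(A)\|_{\HS}^2+\|\Phi(B)\|_{\HS}^2$ for Hermitian traceless $A,B$. This needs only that $\Phi$ preserves Hermiticity, and it shows directly that, for each fixed $\lambda$, the Hermitian bound implies the general one.

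Your main argument is spectral instead. Since $\Phi^*$ also preserves $\End_0(\mathcal H)$, the restriction of $\Phi^*\Phi$ is the Gram operator of $\Phi|_{\End_0(\mathcal H)}$. It commutes with $X\mapsto X^*$ and has real eigenvalues, so its top eigenspace is closed under this involution and contains a nonzero Hermitian element. This uses a bit more machinery, namely the adjoint channel and the spectral theorem. In return it gives an explicit maximizer: a density matrix $\omega_N+tX$ that attains equality in the contraction inequality.
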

\begin{proof}
We first pass from density matrices to Hermitian traceless operators, then to all traceless operators. Write $\rho=\omega_N+X$. Then $X$ is Hermitian and traceless. Conversely, for every Hermitian traceless $X$, the operator $\omega_N+tX$ is a density matrix for all sufficiently small $t>0$. Thus \cref{eq:density-contraction} is equivalent to $\|\Phi(X)\|_{\HS}\leq\lambda\|X\|_{\HS}$ for Hermitian traceless $X$.

For a general traceless $X$, write $X=A+iB$, with $A,B$ Hermitian and traceless. Since $\Phi$ preserves Hermitian operators,
\[
    \|X\|_{\HS}^2=\|A\|_{\HS}^2+\|B\|_{\HS}^2,\qquad
    \|\Phi(X)\|_{\HS}^2
    =\|\Phi(A)\|_{\HS}^2+\|\Phi(B)\|_{\HS}^2.
\]
Applying the Hermitian bound to $A$ and $B$ proves the claimed norm formula. If $\rho$ is fixed and $\sigma_2(\Phi)<1$, applying the strict contraction to $\rho-\omega_N$ gives $\rho=\omega_N$.
\end{proof}

\subsection{Cayley walks and irreducibility}

Let $S=S^{-1}$ be a generating set of a finite group $G$. The normalized right \emph{Cayley walk} acts on functions $f:G\to\C$ by
\begin{equation*}
    (A_Sf)(g):=\frac1{|S|}\sum_{s\in S}f(gs).
\end{equation*}
We give these functions the norm $\|f\|_2^2:=\sum_{g\in G}|f(g)|^2$. Symmetry of $S$ makes $A_S$ Hermitian. Generation makes its eigenvalue-one space exactly the constant functions. Writing $\mathbf1$ for the constant-one function, the classical estimate we need is
\begin{equation}\label{eq:generic-cayley-bound}
    \left\|A_S\big|_{\mathbf1^\perp}\right\|_{2\to2}\leq\lambda.
\end{equation}
Here $\mathbf1^\perp$ consists of the mean-zero functions. A connected non-bipartite Ramanujan Cayley graph of degree $D$ satisfies this estimate with $\lambda=\lambda_D$.

A \emph{unitary representation} assigns a unitary $U_g$ to each $g\in G$, with $U_gU_h=U_{gh}$. A \emph{projective unitary representation} allows a scalar phase:
\[
    U_gU_h=c(g,h)U_{gh},\qquad |c(g,h)|=1.
\]
It is \emph{irreducible} if no subspace other than $0$ and $\mathcal H$ is invariant under every $U_g$. The \emph{commutant} is the set of operators that commute with every $U_g$. We need the following form of Schur's lemma.

\begin{lemma}[The commutant criterion]\label{lem:schur-commutant}
A finite-dimensional unitary representation, ordinary or projective, is irreducible exactly when its commutant consists of the scalar operators.
\end{lemma}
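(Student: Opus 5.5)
We prove the two directions separately, working with a finite-dimensional Hilbert space $\mathcal H$ and unitaries $U_g$ satisfying $U_gU_h=c(g,h)U_{gh}$ with $|c(g,h)|=1$ (the ordinary case is $c\equiv1$). The key preliminary observation is that the scalar phases never matter. A subspace $W$ is invariant under $U_g$ if and only if it is invariant under $c\,U_g$ for any nonzero scalar $c$. Likewise, an operator commutes with $U_g$ if and only if it commutes with $c\,U_g$. So both irreducibility and the commutant depend only on the set of lines $\{\C U_g\}$.

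\emph{Commutant scalar implies irreducible.} Suppose $W$ is a nonzero invariant subspace and let $P$ be the orthogonal projection onto $W$. Since $U_gW\subseteq W$, $U_g$ is unitary and $\mathcal H$ is finite-dimensional, we get $U_gW=W$. Hence $U_g^{-1}=U_g^*$ also preserves $W$, so $U_g$ preserves $W^\perp$ as well. It follows that $U_gP=PU_g$ for every $g$, so $P$ lies in the commutant. By hypothesis $P$ is scalar. Since $P$ is a nonzero projection, $P=\mathds 1$, which gives $W=\mathcal H$.

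\emph{Irreducible implies commutant scalar.} Let $T$ commute with every $U_g$. Because $U_g^*=U_g^{-1}$, taking adjoints shows that $T^*$ also commutes with every $U_g$. Therefore the Hermitian operators $A=(T+T^*)/2$ and $B=(T-T^*)/(2i)$ lie in the commutant, and it suffices to show that every Hermitian $A$ in the commutant is scalar. Let $\mu$ be an eigenvalue of $A$, which exists because $\mathcal H$ is finite-dimensional. Its eigenspace $\ker(A-\mu\mathds 1)$ is nonzero and, since $U_g$ commutes with $A$, it is invariant under every $U_g$. By irreducibility this eigenspace is all of $\mathcal H$, so $A=\mu\mathds 1$. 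Applying this to $A$ and $B$ shows that $T$ is scalar. Alternatively, one could apply the eigenspace argument directly to $T$ over $\C$, which avoids splitting into Hermitian parts.

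I expect no real obstacle here. The only points needing care are the following.
\begin{itemize}
\item Invariance of the orthogonal complement requires unitarity, and $U_gW=W$ uses finite dimension.
\item The projective phases $c(g,h)$ are harmless because both notions are insensitive to rescaling each $U_g$.
\end{itemize}

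Scalar operators are trivially in the commutant, so the commutant equals $\C\mathds 1$ exactly when the representation is irreducible.
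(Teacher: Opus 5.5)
Your proof is correct and is essentially the paper's argument: an eigenspace of a commuting operator is invariant, and conversely, by unitarity, the orthogonal projector onto an invariant subspace lies in the commutant, with the projective phases irrelevant throughout. Your detour through the Hermitian parts of $T$ is unnecessary but harmless; as you note yourself, the eigenspace argument applies directly to $T$ over the complex numbers, which is what the paper does.
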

\begin{proof}
Suppose the representation is irreducible and $T$ commutes with every $U_g$. Choose an eigenvalue $a$ of $T$. Its eigenspace is nonzero and invariant, so it is all of $\mathcal H$. Hence $T=a\mathds{1}$. Conversely, if a nonzero proper subspace is invariant, unitarity makes its orthogonal complement invariant too. The orthogonal projector onto that subspace is a nonscalar operator in the commutant. Neither argument depends on the projective phases.
\end{proof}

\subsection{The transfer lemma}

The idea is to turn a traceless operator $X$ into the function $g\mapsto U_gXU_g^*$. Irreducibility makes this function mean zero. The classical walk then contracts it. This proves Harrow's transfer principle~\cite{Harrow2008QuantumExpanders} in the projective form we use.

\begin{lemma}[Spectral transfer]\label{lem:spectral-transfer}
Let $S=S^{-1}$ generate a finite group $G$, and suppose its normalized right Cayley walk satisfies \cref{eq:generic-cayley-bound}. Let $g\mapsto U_g$ be an irreducible projective unitary representation on $\mathcal H$. Then
\[
    \Phi(X):=\frac1{|S|}\sum_{s\in S}U_sXU_s^*
\]
satisfies $\|\Phi(X)\|_{\HS}\leq\lambda\|X\|_{\HS}$ for every traceless operator $X$.
\end{lemma}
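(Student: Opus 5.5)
We will prove the bound by lifting $X$ to a vector-valued function on $G$, applying the classical walk coordinatewise, and then projecting back. Fix a traceless $X$ and define $F:G\to\End(\mathcal H)$ by $F(g):=U_gXU_g^*$. The projective phases cancel under conjugation, so $U_gU_sXU_s^*U_g^*=U_{gs}XU_{gs}^*$. In other words, $F$ is an honest function whose right translates are controlled by the group law, and
\[
(A_S F)(g)=\frac1{|S|}\sum_{s\in S}F(gs)=U_g\,\Phi(X)\,U_g^*,
\]
where $A_S$ acts on $\End(\mathcal H)$-valued functions coordinatewise. We norm such functions by $\|F\|^2:=\sum_{g}\|F(g)\|_{\HS}^2$. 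Unitary invariance of the Hilbert--Schmidt norm gives $\|F\|^2=|G|\,\|X\|_{\HS}^2$ and $\|A_SF\|^2=|G|\,\|\Phi(X)\|_{\HS}^2$. So it suffices to show $\|A_SF\|\le\lambda\|F\|$.

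Next we show $F$ is mean zero, i.e.\ $M:=\sum_g U_gXU_g^*=0$. For any $h$, $U_hMU_h^*=\sum_g U_{hg}XU_{hg}^*=M$, again because the phases cancel. So $M$ lies in the commutant, and by \cref{lem:schur-commutant} irreducibility gives $M=a\mathds 1$. Taking traces, $aN=|G|\tr X=0$, hence $M=0$.

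To pass from scalar functions to operator-valued ones, fix an orthonormal basis $\{E_k\}$ of $\End(\mathcal H)$ for the HS inner product and write $F(g)=\sum_k f_k(g)E_k$ with $f_k(g)=\langle E_k,F(g)\rangle_{\HS}$. Then each $f_k$ is mean zero, since $\sum_g f_k(g)=\langle E_k,M\rangle_{\HS}=0$. Because $A_S$ is linear and acts coordinatewise, $A_SF=\sum_k (A_Sf_k)E_k$. Therefore
\[
\|A_SF\|^2=\sum_k\|A_Sf_k\|_2^2\le\lambda^2\sum_k\|f_k\|_2^2=\lambda^2\|F\|^2,
\]
by \cref{eq:generic-cayley-bound}. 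Dividing by $|G|$ yields the claim.

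The main care point is the bookkeeping with projective phases: we must check that the identity $U_gU_s(\cdot)U_s^*U_g^*=U_{gs}(\cdot)U_{gs}^*$ holds exactly. It does, since $c(g,s)\overline{c(g,s)}=1$. We also need that the mean-zero property passes to each scalar coordinate function; this holds because the coordinates are linear functionals of $F$.
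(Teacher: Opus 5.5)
Your proposal is correct and follows the paper's proof essentially step by step. It uses the same lift $g\mapsto U_gXU_g^*$ (the paper normalizes by $1/\sqrt{|G|}$ where you divide by $|G|$ at the end), the same identity $(A_SF)(g)=U_g\Phi(X)U_g^*$, mean-zero via the commutant criterion, and the same coordinatewise application of the scalar bound; your explicit orthonormal-basis bookkeeping just spells out the step the paper states in one sentence.
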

\begin{proof}
For an operator $X$, define the operator-valued function
\[
    f_X(g):=\frac1{\sqrt{|G|}}U_gXU_g^*.
\]
We give such functions the norm $\|f\|^2:=\sum_{g\in G}\|f(g)\|_{\HS}^2$. Unitary conjugation preserves the Hilbert--Schmidt norm, so
\begin{equation}\label{eq:transfer-isometry}
    \|f_X\|^2
    =\frac1{|G|}\sum_{g\in G}\|U_gXU_g^*\|_{\HS}^2
    =\|X\|_{\HS}^2.
\end{equation}

We next show that $f_X$ has mean zero when $\tr X=0$. The average
\[
    \overline X:=\frac1{|G|}\sum_{g\in G}U_gXU_g^*
\]
commutes with every $U_h$. Indeed, conjugating this sum by $U_h$ permutes its terms, since the projective phases cancel. By \cref{lem:schur-commutant}, $\overline X$ is scalar. It has trace zero, so it is zero.

Finally, applying the classical walk to $f_X$ gives the same function as applying the quantum channel to $X$:
\[
\begin{aligned}
    (A_Sf_X)(g)
    &=\frac1{|S|\sqrt{|G|}}\sum_{s\in S}U_{gs}XU_{gs}^*\\
    &=\frac1{\sqrt{|G|}}U_g\Phi(X)U_g^*
    =f_{\Phi(X)}(g).
\end{aligned}
\]
The bound in \cref{eq:generic-cayley-bound} also holds for operator-valued mean-zero functions: apply it to each coordinate in an orthonormal basis of operator space and sum the squared bounds. Thus \cref{eq:transfer-isometry} gives
\[
    \|\Phi(X)\|_{\HS}
    =\|f_{\Phi(X)}\|
    =\|A_Sf_X\|
    \leq\lambda\|f_X\|
    =\lambda\|X\|_{\HS}.
\]
\end{proof}

The function $f_X$ is used only in the analysis. We do not prepare it, enumerate $G$, or compute a group Fourier transform.

\section{The classical graph family}\label{sec:classical-input}

We need $D=q+1$ generators whose Cayley walk has norm at most $\lambda_D$ on mean-zero functions. Morgenstern's construction supplies them~\cite{MORGENSTERN199444}. For a publicly accessible reference we use the 1991 report~\cite{Morgenstern1991Report}. We use theorem numbering for this version below. Here $\SL_2(K)$ is the group of determinant-one matrices over a field $K$, and $\PSL_2(K):=\SL_2(K)/\{\pm\mathds{1}_2\}$.

\begin{theorem}[Morgenstern finite-field $\PSL_2$ family]\label{thm:morgenstern-family}
Fix $D=q+1$, where $q=\ell^r$ is an odd prime power. For every even $n\geq2$, put $p:=q^n$. There is a symmetric generating set $S_p\subseteq G_p:=\PSL_2(\F_p)$ such that $|S_p|=D$, $1\notin S_p$, and $\operatorname{Cay}(G_p,S_p)$ is connected and non-bipartite. Its normalized right walk
\begin{equation*}
    (A_pf)(g):=\frac1D\sum_{s\in S_p}f(gs)
\end{equation*}
satisfies
\begin{equation}\label{eq:classical-bound}
    \left\|A_p\big|_{\mathbf1^\perp}\right\|_{2\to2}\leq\lambda_D.
\end{equation}
A field model for $\F_p$, the generators $S_p$, and lifts to $\SL_2(\F_p)$ are computable deterministically in $O(n^5)$ operations over the fixed field $\F_q$.
\end{theorem}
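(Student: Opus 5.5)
This theorem is essentially a citation of Morgenstern's construction. The plan is to recall it from \cite{Morgenstern1991Report} and check that its output matches the normalization and the properties listed in \cref{thm:morgenstern-family}.

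\textbf{Step 1: the arithmetic setup.} Morgenstern works over the function field $\F_q(t)$. He fixes an irreducible polynomial $g(t)\in\F_q[t]$ of even degree $n$, so the residue field is $\F_q[t]/(g)\cong\F_{q^n}=\F_p$. He also fixes a nonsquare $\epsilon\in\F_q$. He then considers the $q+1$ solutions $(\gamma,\delta)\in\F_q[t]^2$ of $\gamma^2-\epsilon\delta^2=t(t-1)$, and maps each to a matrix
\[
    \begin{pmatrix}1&\gamma+\delta\sqrt{\epsilon}\\(\gamma-\delta\sqrt{\epsilon})t&1\end{pmatrix}
\]
in a quaternion-algebra model. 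Reducing modulo $g$ gives matrices over $\F_p$ of determinant $1-t(t-1)t^{-1}\cdots$, which rescale to $\SL_2$ up to a square factor. Because $n$ is even, $\F_q\subseteq\F_p$ and $\sqrt{\epsilon}\in\F_p$, so the matrices live in $\PGL_2(\F_p)$. Moreover their determinants are squares, so they land in $\PSL_2(\F_p)$. This even-degree case is exactly the one in which Morgenstern's theorem yields Cayley graphs on $\PSL_2$ rather than $\PGL_2$.

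\textbf{Step 2: import the Ramanujan property.} Morgenstern's main theorem, proved via Drinfeld's solution of the Ramanujan--Petersson conjecture for $\GL_2$ over function fields, states the following. The $(q+1)$-regular Cayley graph with these generators is connected, non-bipartite in the even-degree case, and every eigenvalue $\mu\neq\pm(q+1)$ of its adjacency operator satisfies $|\mu|\le2\sqrt q$. Non-bipartiteness excludes $-(q+1)$, and connectivity makes $q+1$ simple with eigenspace the constants. The adjacency operator is symmetric because $S_p=S_p^{-1}$: the solution set is closed under conjugation, which corresponds to inversion modulo scalars. Dividing by $D=q+1$ therefore gives \cref{eq:classical-bound} with $\lambda_D=2\sqrt{D-1}/D$. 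The normalization matters here: the right walk $f\mapsto\frac1D\sum_s f(gs)$ is unitarily equivalent to the adjacency operator of the (right or left) Cayley graph. I also need to check that $1\notin S_p$, that the $D$ elements are distinct, and that they generate $G_p$. Distinctness and $1\notin S_p$ come from $\gamma,\delta$ having degree below $n$, so that reduction modulo $g$ is injective on them. Generation follows from connectivity.

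\textbf{Step 3: the computational claim, which I expect to be the main obstacle.} There are three tasks.
\begin{itemize}
    \item Find an irreducible polynomial of degree $n$ over $\F_q$ deterministically. Shoup's algorithm does this in $\tilde O(n^4)$ operations over $\F_q$ (up to $\ell$-dependent factors, which are constant because $q$ is fixed); the $O(n^5)$ bound absorbs it.
    \item Compute the $q+1=O(1)$ solutions of the norm equation. These have constant degree and can be enumerated by brute force over $\F_q$ in $O(1)$ time.
    \item Lift the generators to $\SL_2(\F_p)$. This requires scaling each matrix by a square root of its determinant, that is, extracting a square root in $\F_p$. Deterministic square roots in $\F_{q^n}$ with $q$ fixed can be computed by factoring $x^2-a$, using Berlekamp's deterministic algorithm for fixed characteristic, in $\mathrm{poly}(n)$ time. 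I would target $O(n^5)$ through a careful count, or alternatively choose a normalization whose determinant is a visibly explicit square.
\end{itemize}
The square-root step is the part where I am least sure the exponent $5$ comes out cleanly. The other two costs are dominated by the irreducible-polynomial search.
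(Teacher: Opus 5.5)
There is a genuine gap in Step~1, and it carries into Steps~2 and~3. You assert that when $\deg g$ is even the generator determinants are squares, so the image lies in $\PSL_2(\F_p)$ and the graph is non-bipartite. Even degree only guarantees that the nonsquare $\nu\in\F_q^*$ has a square root $z$ in $k_g=\F_q[t]/(g)$.

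With the correct generators $1+cj+dij$, where $(c,d)\in\F_q^2$ satisfies $\nu d^2-c^2=1$, the image matrix is $\begin{pmatrix}1&c+dz\\(\overline t-1)(c-dz)&1\end{pmatrix}$. Its determinant is $\overline t$, and the square class of $\overline t$ depends on $g$. For even $n$,
\[
\overline t^{(p-1)/2}=N_{k_g/\F_q}(\overline t)^{(q-1)/2}=g(0)^{(q-1)/2},
\]
so $\overline t$ is a square exactly when $g(0)$ is a square in $\F_q$. For example, $g=t^2+t+2$ over $\F_3$ is irreducible and coprime to $t(t-1)$, and $g(0)=2$ is a nonsquare. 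In that branch Morgenstern's quotient is $\PGL_2(k_g)$; his Theorem~4.13 lists both branches for even-degree $g$. Every generator then switches the determinant square class, so the graph is bipartite. The $\pm1$ function on the two classes is a mean-zero eigenfunction with eigenvalue $-1$, and $\|A_p|_{\mathbf 1^\perp}\|\le\lambda_D$ fails.

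Your description of the generators is also inaccurate, though this is repairable by consulting the report. The $q+1$ solutions are constants in $\F_q$, not polynomial solutions of $\gamma^2-\epsilon\delta^2=t(t-1)$, and the lower-left entry carries the factor $\overline t-1$.

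So an arbitrary irreducible polynomial from Shoup's algorithm is not enough. The missing idea is to construct, deterministically, an irreducible $g$ of degree $n$ for which $\overline t$ is a square, ideally with a known square root. The paper does this in its lemma on uniform square moduli:
\begin{enumerate}
\item It builds some $K\cong\F_{q^n}$ via Shoup.
\item It finds $\beta\in K$ outside the $\F_q$-subspaces $\ker(F^d\mp\mathds{1})$ for $1\le d\le n/2$, where $F$ is the $q$-Frobenius, using a coordinate-by-coordinate counting search. This guarantees that $\alpha=\beta^2$ generates $K$ over $\F_q$.
\item It takes $g_n$ to be the minimal polynomial of $\alpha$.
\end{enumerate}
Then $\overline t=\beta^2$ has an explicit square root, which gives the lifts $b^{-1}M_{c,d}\in\SL_2(\F_p)$ with no square-root extraction. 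The square root of $\nu$ comes from a nonzero element of $\ker(F+\mathds{1})$, rescaled by an element of $\F_q$. This search is where the $O(n^5)$ cost comes from. Your Berlekamp-based square roots would also work for the lift, but only once $g$ has been chosen correctly.

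Finally, once the group is identified as $\PSL_2(\F_p)$, the paper proves non-bipartiteness directly rather than citing it. $\PSL_2(\F_p)$ is generated by transvections of odd order $\ell$, so it has no nontrivial homomorphism to $\{\pm1\}$.
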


We prove this theorem below. The main choice is between two finite quotients: one gives a bipartite graph, while the other gives the norm bound we need. We defer the algorithm that constructs the field model to \cref{sec:classical-preprocessing}.

\subsection{Which finite quotient?}

For an odd prime power $Q$, the multiplicative group $\F_Q^*$ is cyclic. A nonzero element is a \emph{square} if it has a square root in the field. Euler's test gives
\begin{equation*}
    a^{(Q-1)/2}=
    \begin{cases}
        1,&a\in(\F_Q^*)^2,\\
        -1,&a\notin(\F_Q^*)^2.
    \end{cases}
\end{equation*}
The \emph{square class} of $a$ records which of these two cases holds.

Write $\PGL_2(K):=\operatorname{GL}_2(K)/(K^*\mathds{1}_2)$ for invertible matrices modulo nonzero scalars. Multiplying a matrix by a scalar changes its determinant by a square. Thus determinant induces the map
\begin{equation*}
\begin{aligned}
    \det_{\square}:\PGL_2(K)&\longrightarrow K^*/(K^*)^2,
    & [A]&\longmapsto[\det A],\\
    \ker(\det_{\square})&=\PSL_2(K).
\end{aligned}
\end{equation*}
An element of $\PSL_2(K)$ has two lifts to $\SL_2(K)$, differing by $-\mathds{1}_2$.

Let $g\in\F_q[t]$ be monic and irreducible. Its \emph{residue field} is $k_g:=\F_q[t]/(g)$, and $\overline t$ denotes the image of $t$ in this field. A \emph{projective quaternion class} is an invertible quaternion modulo nonzero scalars. Quaternion conjugation is denoted $x\mapsto\overline x$. The \emph{reduced norm} $\operatorname{Nrd}(x):=x\overline x$ plays the role of determinant and gives $x^{-1}=\overline x/\operatorname{Nrd}(x)$ when the norm is nonzero.

We specify the basic generators so that the signs and the compiler's inputs are explicit. Fix a nonsquare $\nu\in\F_q^*$ and use the quaternion algebra over $\F_q(t)$ with
\[
    i^2=\nu,\qquad j^2=t-1,\qquad ij=-ji.
\]
Set
\begin{equation}\label{eq:basic-norm-t-generators}
    \mathcal T_q:=\{(c,d)\in\F_q^2:\nu d^2-c^2=1\},
    \qquad \xi_{c,d}:=1+cj+dij.
\end{equation}
These are Morgenstern's basic norm-$t$ elements~\cite[Lemma~4.2(c), Eq.~(9), and Definition~4.3]{Morgenstern1991Report}. There are $q+1$ pairs in $\mathcal T_q$: the field norm $\F_{q^2}^*\to\F_q^*$ is onto, each fiber has $q+1$ elements, and $\mathcal T_q$ is its fiber over $-1$, written as $c^2-\nu d^2=-1$. Directly,
\[
    \operatorname{Nrd}(\xi_{c,d})
    =1+(\nu d^2-c^2)(t-1)=t,
    \qquad \overline{\xi_{c,d}}=\xi_{-c,-d}.
\]
Thus conjugation pairs these projective classes with their inverses. The sign in $\nu d^2-c^2=1$ matters: replacing its left side by $c^2-\nu d^2$ would give norm $2-t$, not $t$.

\begin{theorem}[Morgenstern's finite quotient, specialized]\label{thm:morgenstern-specialization}
Let $g\in\F_q[t]$ be monic and irreducible, relatively prime to $t(t-1)$, and of even degree. Reducing the projective classes of the $q+1$ elements in \cref{eq:basic-norm-t-generators} modulo $g$ gives a symmetric set $B_g$ of distinct nonidentity generators of a finite quotient $\Gamma_g$, with
\begin{equation}\label{eq:morgenstern-quotient-dichotomy}
    \Gamma_g\cong
    \begin{cases}
        \PSL_2(k_g),&\overline t\in(k_g^*)^2,\\
        \PGL_2(k_g),&\overline t\notin(k_g^*)^2.
    \end{cases}
\end{equation}
The Cayley graph $\operatorname{Cay}(\Gamma_g,B_g)$ is $(q+1)$-regular and Ramanujan.
\end{theorem}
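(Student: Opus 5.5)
The plan is to specialize Morgenstern's general theorem rather than rederive the arithmetic. Morgenstern~\cite{Morgenstern1991Report} works with the definite quaternion algebra $H$ over $\F_q(t)$ ramified at $\infty$ and at $(t-1)$, realized as $i^2=\nu$, $j^2=t-1$. He takes the $S$-arithmetic group $\Gamma$ of projective classes of elements of the maximal order with norm a power of $t$, and shows the following.
\begin{enumerate}
\item The classes $[\xi_{c,d}]$, $(c,d)\in\mathcal T_q$, generate a group $\Gamma$ acting simply transitively on the $(q+1)$-regular Bruhat--Tits tree of $\PGL_2(\F_q((t)))$. Hence the Cayley graph of $\Gamma$ on these generators is the tree itself, and the $[\xi_{c,d}]$ form a free product of $(q+1)/2$ copies of $\mathbb Z$ or $\mathbb Z/2$ according to their pairing.
\item For $g$ coprime to $t(t-1)$, reduction modulo $g$ maps $\Gamma$ into $\PGL_2(k_g)$, using a splitting $H\otimes k_g\cong M_2(k_g)$. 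The image is either $\PSL_2(k_g)$ or all of $\PGL_2(k_g)$.
\item The quotient graphs are Ramanujan, by the Drinfeld/Deligne bound on Hecke eigenvalues (the Ramanujan--Petersson conjecture for $\operatorname{GL}_2$ over function fields) together with Jacquet--Langlands.
\end{enumerate}
I will cite these as \cite[Theorem~4.13 and Section~5]{Morgenstern1991Report} and only check that our hypotheses and sign conventions match his.

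The steps, in order, are as follows.

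First, I will check that the $\xi_{c,d}$ are exactly Morgenstern's basic norm-$t$ elements. The norm computation $\operatorname{Nrd}(\xi_{c,d})=t$ and the conjugation identity $\overline{\xi_{c,d}}=\xi_{-c,-d}$ already appear in the text. The conjugation identity shows that $B_g$ is closed under inversion, since $[\xi]^{-1}=[\overline\xi]$ projectively.

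Second, I will show the reductions are distinct and nonidentity. Distinct elements of $\Gamma$ that are close in the tree metric stay distinct modulo $g$ once $\deg g\geq 1$. Alternatively, this is direct: if $\xi_{c,d}\equiv\lambda\xi_{c',d'}$ mod $g$, comparing coefficients of $1,j,ij$ forces $\lambda=1$ and $(c,d)=(c',d')$. Here the basis must remain independent after reduction, which holds because $g\nmid t-1$ keeps $j$ invertible. The same comparison with the identity shows $[\xi_{c,d}]\neq 1$, because $(c,d)\neq(0,0)$.

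Third, I will identify the dichotomy. The image of $[\xi]$ under $\det_\square$ is the square class of $\operatorname{Nrd}(\xi)=t$ reduced modulo $g$, namely $\overline t$. So if $\overline t$ is a square, every generator lies in $\ker\det_\square=\PSL_2(k_g)$. If $\overline t$ is not a square, the image of $\Gamma$ surjects onto $K^*/(K^*)^2$. Morgenstern's strong approximation statement then says the image of the norm-$t^{2k}$ subgroup is all of $\PSL_2(k_g)$, which gives these two cases.

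Fourth, regularity and the Ramanujan property come from the theorem as cited.

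The main obstacle I expect is the bookkeeping in the third step: making sure the even-degree hypothesis on $g$ and the coprimality to $t(t-1)$ are exactly what Morgenstern needs. Coprimality gives the splitting of $H$ at $g$ and avoids ramification. Even degree is, as I understand it, what makes $\nu$, and hence the relevant local data, behave so that the image is full rather than a proper subgroup. I would first try to match his Theorem~4.13 hypotheses directly. If the conventions differ, for instance in the sign producing norm $2-t$ instead of $t$, I would use the change of variables $t\mapsto 2-t$ to translate.
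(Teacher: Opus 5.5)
Your proposal is correct and follows the paper's route. Like the paper, you cite Morgenstern for the $(q+1)$-regular Ramanujan quotient and for the fact that the reduction image contains $\PSL_2(k_g)$, then separate the two cases by the square class of the generators' determinant $\overline t$, using that $\PSL_2(k_g)$ has index two in $\PGL_2(k_g)$. The one place the paper is more precise is the even-degree hypothesis: it uses $\nu^{(|k_g|-1)/2}=(-1)^{\deg g}=1$ to make $\nu$ a square in $k_g$, so that the explicit matrix splitting under which Morgenstern's lemma is stated applies, rather than leaving that role vague as you do.
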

\begin{proof}
Theorem~4.10 of Morgenstern's report gives the finite $(q+1)$-regular Cayley graph, and Theorem~4.11 bounds its adjacency eigenvalues other than $\pm(q+1)$ by $2\sqrt q$ in absolute value~\cite[Theorems~4.10--4.11]{Morgenstern1991Report}.

To identify its group, note that
\[
    \nu^{(|k_g|-1)/2}
    =\left(\nu^{(q-1)/2}\right)^{1+q+\cdots+q^{\deg g-1}}
    =(-1)^{\deg g}=1.
\]
Thus $\nu$ has a square root in $k_g$, and the quaternion algebra has the matrix form in \cref{eq:quaternion-splitting}. Lemma~4.12 of the report shows that the reduction image contains $\PSL_2(k_g)$~\cite[Lemma~4.12]{Morgenstern1991Report}. Every basic generator has determinant $\overline t$, so the image in $\PGL_2(k_g)/\PSL_2(k_g)$ is generated by $[\overline t]$. This quotient has order two, proving \cref{eq:morgenstern-quotient-dichotomy}. Theorem~4.13 of the report states these two cases explicitly~\cite[Theorem~4.13]{Morgenstern1991Report}.
\end{proof}

\begin{remark}[Why we need the square case]\label{rem:morgenstern-branches}
If $\overline t$ is a nonsquare, every generator changes the determinant square class. These two classes give a bipartition of the Cayley graph on $\PGL_2(k_g)$. The function that is $1$ on one class and $-1$ on the other has mean zero and eigenvalue $-1$. A bipartite graph can be Ramanujan, but this eigenvalue prevents the absolute norm bound required by the transfer lemma. We choose $\overline t$ to be a square, which selects the $\PSL_2$ quotient.
\end{remark}

\subsection{Selecting the non-bipartite family}

We need a field model in which both $\overline t$ and the fixed nonsquare $\nu\in\F_q^*$ have known square roots. The first selects the desired quotient. The second lets us write its generators as matrices.

\begin{lemma}[Uniform square moduli]\label{lem:square-irreducible-moduli}
For every even $n\geq2$, a deterministic algorithm outputs polynomials $g_n,b_n,z_n\in\F_q[t]$ such that $g_n$ is monic irreducible of degree $n$, $\gcd(g_n,t(t-1))=1$, and, in $k_n:=\F_q[t]/(g_n)$,
\begin{equation*}
    b_n^2=\overline t,\qquad z_n^2=\nu.
\end{equation*}
It uses $O(n^5)$ arithmetic operations over the fixed field $\F_q$.
\end{lemma}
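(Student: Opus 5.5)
The plan is to first build \emph{any} field model of $\F_{q^n}$, then change the generator of the field so that the new generator is itself a square. For the first step I would use Shoup's deterministic algorithm for constructing irreducible polynomials over a finite field of fixed characteristic. It outputs a monic irreducible $f\in\F_q[t]$ of degree $n$ in time polynomial in $n$, which for fixed $q$ should fit within the $O(n^5)$ budget. Let $\alpha=\overline t$ be its root in $k:=\F_q[t]/(f)$.

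The key observation is what $\overline t$ becomes in the new model. Suppose $\beta\in k$ generates $k$ over $\F_q$, i.e.\ $\beta$ lies in no proper subfield, and let $g_n$ be its minimal polynomial, which is monic, irreducible and of degree $n$. Then $\F_q[t]/(g_n)\cong k$ with $\overline t\mapsto\beta$. So $\overline t$ is a square in $k_n$ exactly when $\beta$ is a square in $k$.

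I would therefore search for $\beta=\gamma^2$ with $\gamma$ running over a fixed, polynomially bounded list of candidates, for example $\gamma=\alpha+c$ and then $\gamma=\alpha^j+c$ with $c\in\F_q$ and small $j$. Each candidate is tested for generating $k$: compute $\beta^{q^{d}}$ for the divisors $d<n$ of $n$, by iterated Frobenius via a precomputed $n\times n$ Frobenius matrix, and check $\beta^{q^d}\neq\beta$. The coprimality $\gcd(g_n,t(t-1))=1$ is then automatic, since a root of $g_n$ that generates a degree-$n\geq2$ extension is neither $0$ nor $1$.

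Once a good $\beta$ is found, I would carry out three linear-algebra computations:
\begin{itemize}
\item compute $g_n$ by finding the linear dependency among $1,\beta,\dots,\beta^n$;
\item express $\gamma$ in the basis $1,\beta,\dots,\beta^{n-1}$, which gives the polynomial $b_n$ with $b_n(\beta)=\gamma$, hence $b_n^2=\overline t$ in $k_n$;
\item compute the subfield $\F_{q^2}\subseteq k_n$, which exists because $n$ is even, as the kernel of $\mathrm{Frob}^2-\mathrm{id}$. It is a $2$-dimensional $\F_q$-space, and I would enumerate its $q^2$ elements to find $z_n$ with $z_n^2=\nu$.
\end{itemize}
Each of these costs $O(n^3)$ operations plus the Frobenius matrix, which is roughly $O(n^3\log q)$. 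This agrees with the earlier norm computation showing that $\nu$ is a square when $\deg g$ is even.

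The main obstacle is guaranteeing \emph{deterministically} that the candidate list contains a $\gamma$ for which $\gamma^2$ generates $k$. If $\gamma^2$ lies in a proper subfield $\F_{q^d}$, then $\gamma$ lies in $\F_{q^{2d}}$, which is a proper subfield of $k$ unless $d=n/2$. So the bad $\gamma$ are those lying in the union of the proper subfields $\F_{q^{d}}$ with $d\mid n$, $d<n$, together with the square roots of elements of $\F_{q^{n/2}}$. The latter set has size at most $2q^{n/2}$, a negligible fraction of $k$.

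I would first try the family $\gamma=\alpha+c$. This fails only if $(\alpha+c)^{2q^{n/2}}=(\alpha+c)^2$, that is $(\alpha+c)^{q^{n/2}}=\pm(\alpha+c)$. The $+$ sign is impossible because $\alpha$ generates $k$. The $-$ sign means $\alpha^{q^{n/2}}-\alpha=-2(\alpha+c)$, so the element $\alpha^{q^{n/2}}+\alpha$ equals $-2c\in\F_q$, and this can hold for at most one value of $c$. Hence some $c\in\F_q$ works, since $q\geq3$, and the search is over a constant number of candidates.

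If this counting goes through, the total cost is dominated by Shoup's algorithm together with the $O(n^3)$ linear algebra, which is within $O(n^5)$.
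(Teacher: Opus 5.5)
Your proof is correct, and it takes a genuinely different route from the paper at the key step of producing a square that generates the field.

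\textbf{The paper's route.} It does not use the generator of Shoup's model at all. It searches all of $K$ for $\beta$ outside the $n$ subspaces $\ker(F^d\mp\mathds{1})$, $1\le d\le n/2$. It bounds their union by $nq^{n/2}<q^n$ and then derandomizes by fixing coordinates one at a time with rank computations. That search costs $O(n^5)$, and it is exactly where the lemma's bound comes from.

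\textbf{Your route.} You start from a generator $\gamma$ and observe that $\gamma^2$ can fail to generate only by lying in $\F_{q^{n/2}}$, that is, only if $\gamma^{q^{n/2}}=-\gamma$. For $\gamma=\alpha+c$ this becomes the single affine condition $\alpha^{q^{n/2}}+\alpha=-2c$, which excludes at most one $c\in\F_q$.

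One phrasing fix: $\F_{q^{2d}}$ need not be a subfield of $k$. You want $k\cap\F_{q^{2d}}=\F_{q^{\gcd(n,2d)}}$. Alternatively, say directly that $[\F_{q^d}(\gamma):\F_{q^d}]\le 2$, so a generator $\gamma$ forces $d=n/2$.

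\textbf{Trade-off.} Your version is shorter and cheaper. After Shoup's routine everything is $O(n^3)$ linear algebra, and you can even compute the bad $c$ directly rather than test candidates. The total cost is then governed by Shoup's algorithm. The paper's counting argument is more generic, but it is the $O(n^5)$ bottleneck.

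Your treatment of $\nu$, enumerating the $q^2$ elements of $\ker(F^2-\mathds{1})$, is essentially equivalent to the paper's choice: it takes a nonzero $y$ with $y^q=-y$ and rescales by a constant from $\F_q$.

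\textbf{Cost claim.} ``Polynomial in $n$'' for Shoup's algorithm is not enough to conclude $O(n^5)$. You need his explicit bound, $\widetilde O(n^4)$ bit operations for fixed $q$, which is what the paper cites.
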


We prove this lemma in \cref{sec:classical-preprocessing}. We can now finish the graph argument.

\begin{proof}[Proof of \cref{thm:morgenstern-family}]
Apply \cref{lem:square-irreducible-moduli} and identify $\F_p$ with $k_n$. Since $\overline t=b_n^2$, \cref{thm:morgenstern-specialization} gives the quotient $G_p=\PSL_2(\F_p)$. Let $S_p$ be the image of $B_{g_n}$. It consists of $D$ distinct nonidentity symmetric generators and gives a connected Ramanujan Cayley graph.

It remains to rule out bipartiteness. A connected Cayley graph $\operatorname{Cay}(G,S)$ is bipartite only if there is a homomorphism $G\to\{\pm1\}$ that sends every generator to $-1$. Row elimination shows that $\PSL_2(\F_p)$ is generated by the images of the upper and lower triangular matrices with ones on the diagonal. Every nonidentity such matrix has the odd order $\ell$. Hence every homomorphism from $\PSL_2(\F_p)$ to a group of order two is trivial. The graph is therefore non-bipartite. Dividing Morgenstern's nontrivial adjacency bound $2\sqrt q$ by $D=q+1$ gives
\[
    \frac{2\sqrt q}{q+1}=\frac{2\sqrt{D-1}}D=\lambda_D,
\]
which proves \cref{eq:classical-bound}.

For the algorithmic claim, use $z_n^2=\nu$ to write Morgenstern's quaternion generators as matrices over $k_n$. Each matrix has determinant $\overline t=b_n^2$, so scaling it by $b_n^{-1}$ gives a lift to $\SL_2(\F_p)$. We give the formulas in \cref{eq:quaternion-splitting,eq:uniform-lifts}. Since $q$ is fixed, these steps use $O(n^3)$ operations after the field construction. The total is $O(n^5)$.
\end{proof}

\section{The odd Weil representation}\label{sec:weil}

We need a large irreducible representation whose operators are easy to apply. We begin with three operations on basis states labelled by $\F_p$: a quadratic phase, multiplication of the label by a nonzero field element, and a Fourier transform. These operations give the Weil representation on $\mathcal K_p:=\C^{\F_p}$. This representation is reducible. We restrict it to the states that change sign under $x\mapsto-x$, then prove that the restriction is irreducible.

\subsection{Three explicit operations}

To specify the phases, we use the field trace and its additive character:
\begin{equation*}
\begin{aligned}
    \Tr(z)&:=\operatorname{Tr}_{\F_p/\F_\ell}(z)
    =z+z^\ell+\cdots+z^{\ell^{m-1}},\\
    \psi(z)&:=\exp\!\left(\frac{2\pi i}{\ell}\Tr(z)\right).
\end{aligned}
\end{equation*}
In the exponential, we identify elements of $\F_\ell$ with their standard integer representatives. The next fact ensures that the phases distinguish field elements and that the Fourier transform is unitary.

\begin{lemma}[Trace pairing and character orthogonality]\label{lem:trace-pairing}
The map $\Tr:\F_p\to\F_\ell$ is a nonzero, hence surjective, $\F_\ell$-linear map. The pairing $(x,y)\mapsto\Tr(xy)$ is nondegenerate: for every $x\neq0$ there is a $y$ with $\Tr(xy)\neq0$. The function $\psi$ is a nontrivial additive character, and
\begin{equation}\label{eq:character-orthogonality}
    \frac1p\sum_{y\in\F_p}\psi(ay)=
    \begin{cases}1,&a=0,\\0,&a\neq0.\end{cases}
\end{equation}
\end{lemma}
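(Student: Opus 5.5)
The argument has three parts: show that $\Tr$ is nonzero and linear, deduce nondegeneracy of the pairing, and then derive the character orthogonality \cref{eq:character-orthogonality}.

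\emph{Linearity and nonvanishing.} $\F_\ell$-linearity of $\Tr$ follows from the Frobenius map $z\mapsto z^\ell$. It is additive in characteristic $\ell$, and it fixes $\F_\ell$ pointwise by Fermat's little theorem, so each summand $z^{\ell^i}$ is $\F_\ell$-linear. The image lies in $\F_\ell$, because applying Frobenius to $\Tr(z)$ cyclically permutes the summands, using $z^{\ell^m}=z$; hence $\Tr(z)^\ell=\Tr(z)$. For nonvanishing, note that $\Tr(z)$ is a polynomial in $z$ of degree $\ell^{m-1}<p=\ell^m$. It therefore has fewer than $p$ roots and cannot vanish on all of $\F_p$. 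A nonzero linear map to the one-dimensional space $\F_\ell$ is surjective.

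\emph{Nondegeneracy.} Given $x\neq0$, pick $w$ with $\Tr(w)\neq0$ and set $y=x^{-1}w$. Then $\Tr(xy)=\Tr(w)\neq0$.

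\emph{Character properties and orthogonality.} $\psi$ is an additive character because $\Tr$ is additive into $\F_\ell$ and $e^{2\pi i k/\ell}$ depends only on $k\bmod\ell$. So the reduction of integer representatives causes no trouble. It is nontrivial because $\Tr$ takes the value $1$, giving $\psi=e^{2\pi i/\ell}\neq1$.

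For \cref{eq:character-orthogonality}, the case $a=0$ is immediate. For $a\neq0$, the map $y\mapsto\psi(ay)$ is an additive character, and it is nontrivial: by nondegeneracy choose $y_0$ with $\Tr(ay_0)\neq0$, so $\psi(ay_0)\neq1$. Write $\Sigma=\sum_y\psi(ay)$. Substituting $y\mapsto y+y_0$ gives $\Sigma=\psi(ay_0)\Sigma$, hence $\Sigma=0$.

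The only mildly delicate step is the nonvanishing of $\Tr$, which the degree-counting argument handles. Everything else is routine.
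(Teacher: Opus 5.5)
Your proposal is correct and follows essentially the same route as the paper: degree counting for the nonvanishing of the trace polynomial, the substitution $y=x^{-1}w$ for nondegeneracy, and the translation trick for orthogonality. The one addition is that you verify explicitly that $\Tr$ is $\F_\ell$-linear with values in $\F_\ell$, and that $\psi$ is well defined on integer representatives, which the paper takes for granted.
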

\begin{proof}
The polynomial $T(X):=X+X^\ell+\cdots+X^{\ell^{m-1}}$ is nonzero and has degree less than $p$, so it cannot vanish at all $p$ field elements. Thus $\Tr$ is nonzero and therefore surjective. For $x\neq0$, choose $z$ with $\Tr(z)\neq0$ and set $y=x^{-1}z$. Then $\Tr(xy)\neq0$, which proves nondegeneracy.

Linearity of $\Tr$ gives $\psi(a+b)=\psi(a)\psi(b)$, so $\psi$ is an additive character. Surjectivity makes it nontrivial. For $a\neq0$, translating the sum in \cref{eq:character-orthogonality} by a point where $\psi(ay)\neq1$ multiplies it by a scalar other than one. The sum is therefore zero. The case $a=0$ is immediate.
\end{proof}

Let $\chi:\F_p^*\to\{\pm1\}$ be the quadratic character: it is $1$ on squares and $-1$ on nonsquares. Since $p$ is odd, $2^{-1}$ exists in $\F_p$.

\begin{theorem}[Finite-field Weil representation]\label{thm:weil-representation}
For $t\in\F_p$ and $u\in\F_p^*$, the matrices and corresponding operators
\begin{equation*}
\begin{aligned}
    n(t)&:=\begin{pmatrix}1&t\\0&1\end{pmatrix},
    & W_{n(t)}|x\rangle&:=\psi(tx^2/2)|x\rangle,\\[1ex]
    h(u)&:=\begin{pmatrix}u&0\\0&u^{-1}\end{pmatrix},
    & W_{h(u)}|x\rangle&:=\chi(u)|u^{-1}x\rangle,\\[1ex]
    w&:=\begin{pmatrix}0&1\\-1&0\end{pmatrix},
    & W_w|x\rangle&:=\frac1{\sqrt p}\sum_{y\in\F_p}\psi(xy)|y\rangle
\end{aligned}
\end{equation*}
extend, up to scalar phases, to a projective unitary representation $g\mapsto W_g$ of $\SL_2(\F_p)$ on $\mathcal K_p$.
\end{theorem}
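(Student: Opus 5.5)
The plan is to produce the representation explicitly through the Heisenberg group and the Stone--von Neumann uniqueness theorem. Uniqueness then gives a projective action of $\SL_2(\F_p)$, and we identify the intertwiners on the generators $n(t)$, $h(u)$, $w$ with the stated formulas.

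\textbf{Heisenberg group and its Schr\"odinger representation.} Let $V=\F_p^2$ with the symplectic form $\omega((x_1,\xi_1),(x_2,\xi_2))=x_1\xi_2-\xi_1x_2$. Form the Heisenberg group $H=V\times\F_p$ with product $(v,z)(v',z')=(v+v',z+z'+\tfrac12\omega(v,v'))$. Define translation $T_a|x\rangle=|x+a\rangle$ and modulation $M_b|x\rangle=\psi(bx)|x\rangle$, and set $\pi(a,b,z)=\psi(z)\psi(ab/2)M_bT_a$, up to a fixed convention. A direct check shows $\pi$ is a unitary representation of $H$ with central character $\psi$.

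It is irreducible. The operators $M_b$ span the diagonal algebra, by \cref{eq:character-orthogonality} together with nondegeneracy of $\Tr(xy)$ from \cref{lem:trace-pairing}. Any operator commuting with every $M_b$ is therefore diagonal, and commuting with every $T_a$ forces it to be scalar. \cref{lem:schur-commutant} then gives irreducibility.

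\textbf{Uniqueness and the projective action.} $\SL_2(\F_p)$ acts on $H$ by automorphisms $g\cdot(v,z)=(gv,z)$, since $\omega$ is preserved by determinant-one matrices. Thus $\pi^g:=\pi\circ g^{-1}$ is again irreducible with central character $\psi$. We need Stone--von Neumann: any irreducible representation of $H$ with central character $\psi$ is equivalent to $\pi$. I would prove this by character theory. Since $|H|=p^3$ and $\psi$ is nontrivial on the center, $\frac1{|H|}\sum|\tr\pi(h)|^2=1$. In fact $\tr\pi(v,z)=0$ for $v\neq0$ and equals $p\psi(z)$ for $v=0$, by \cref{eq:character-orthogonality}. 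This character depends only on $\psi$, so the trace formula is $g$-invariant, and $\pi^g\simeq\pi$. Schur's lemma (\cref{lem:schur-commutant}) then gives an intertwiner $W_g$, unique up to a phase, satisfying $W_g\pi(h)W_g^{-1}=\pi(g\cdot h)$. Uniqueness gives $W_gW_{g'}=c\,W_{gg'}$ with $|c|=1$, so $g\mapsto W_g$ is a projective unitary representation.

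\textbf{Identifying the generators.} It remains to check that the three stated operators satisfy the intertwining relations for $n(t)$, $h(u)$ and $w$; by uniqueness they then agree with $W_g$ up to phase. Each check is a short computation, done on $T_a$ and $M_b$.
\begin{itemize}
\item The quadratic phase $\psi(tx^2/2)$ conjugates $T_a$ into $T_a$ times a modulation and a constant, because $\psi(t(x+a)^2/2-tx^2/2)=\psi(tax)\psi(ta^2/2)$. It also commutes with $M_b$.
\item The dilation $|x\rangle\mapsto|u^{-1}x\rangle$ rescales $a$ and $b$ inversely. The factor $\chi(u)$ is only a phase at this stage.
\item The Fourier transform $W_w$ swaps translations and modulations up to sign. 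It is unitary by \cref{eq:character-orthogonality}.
\end{itemize}
Because $n(t)$, $h(u)$ and $w$ generate $\SL_2(\F_p)$ by the factorization in \cref{eq:matrix-factorization}, the intertwiners for a general $g$ are products of these, up to phase.

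\textbf{Main obstacle.} The step needing the most care is the sign and convention bookkeeping. We must match the action on $V$, row versus column vectors and $g$ versus $g^{-1}$, so that each stated formula intertwines exactly the matrix it is attached to. Since the theorem only claims a projective representation, all scalar discrepancies are harmless. I would fix the convention by first checking that $W_w$ and $W_{n(t)}$ are consistent, and adjust $\omega$'s sign if needed.
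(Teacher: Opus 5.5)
Your argument is correct in substance, but it obtains the intertwiners by a different route than the paper. The paper never uses characters or Stone--von Neumann. It writes the Heisenberg operators as $E_{a,b}|x\rangle=\psi(a(x+b/2))|x+b\rangle$ and verifies the three covariance identities in \cref{eq:weil-weyl-covariance} by direct computation. It then uses only the uniqueness half of your argument: an operator commuting with every $E_{a,b}$ is scalar, so two words in $n(t),h(u),w$ for the same matrix give operators differing by a phase. Existence for a general $g$ is automatic because these matrices generate $\SL_2(\F_p)$.

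Your character computation instead supplies an intertwiner for every $g$ a priori. That buys a canonical, word-free definition of $W_g$, and it is the argument that carries over to $\mathrm{Sp}_{2k}$ without explicit generators. Here, though, it is logically redundant. Once the three formulas are shown to intertwine, their products intertwine every $g$, and uniqueness gives the multiplication law. Both proofs rest on the same key fact: the commutant of translations and modulations is scalar.

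The convention check you defer is the actual computational content of the paper's proof, and it is not cosmetic. The matrices realized by the three operators must be the values of a single homomorphism into $\SL_2(\F_p)$, or no projective extension exists. In your ordering $v=(a,b)$ = (translation, modulation) with column action:
\begin{itemize}
\item the quadratic phase realizes $(a,b)\mapsto(a,b+ta)$;
\item the dilation realizes $(a,b)\mapsto(u^{-1}a,ub)$;
\item the Fourier transform realizes $(a,b)\mapsto(-b,a)$.
\end{itemize}
These are $JgJ$ for $g=n(t),h(u),w$, where $J=\begin{pmatrix}0&1\\1&0\end{pmatrix}$. Flipping the sign of $\omega$ does not change which matrices are realized. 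The simplest fix is to order coordinates as (modulation, translation), as the paper's $E_{a,b}$ does; then each operator realizes exactly its own matrix. Alternatively, keep your ordering and let $g$ act by $JgJ$, which is still an automorphism of $\SL_2(\F_p)$ preserving $\omega$.

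Also, $\psi(z)\psi(ab/2)M_bT_a$ is not a homomorphism for your group law. The symmetric operator is $\rho(a,b)=\psi(ab/2)T_aM_b$, and it satisfies $\rho(v)\rho(v')=\psi(-\tfrac12\omega(v,v'))\rho(v+v')$. So to keep central character $\psi$, the law must use $-\tfrac12\omega$.
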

\begin{proof}
The first operator is a diagonal phase, and the second is a phase times a basis permutation. The third is unitary by character orthogonality. To prove the projective multiplication law, we check how these operators conjugate translations and phase modulations.

For $a,b\in\F_p$, define
\[
    E_{a,b}|x\rangle:=\psi\bigl(a(x+b/2)\bigr)|x+b\rangle.
\]
An operator commuting with every $E_{a,b}$ is scalar. Indeed, the phase operators $E_{a,0}$ have distinct joint eigenvalue functions on distinct basis states, by nondegeneracy of the trace pairing. Commuting with all of them forces an operator to be diagonal. Commuting with all translations $E_{0,b}$ then forces its diagonal entries to agree.

Direct calculation gives
\begin{equation}\label{eq:weil-weyl-covariance}
\begin{aligned}
    W_{n(t)}E_{a,b}W_{n(t)}^*&=E_{a+tb,b},\\
    W_{h(u)}E_{a,b}W_{h(u)}^*&=E_{ua,u^{-1}b},\\
    W_wE_{a,b}W_w^*&=E_{b,-a}.
\end{aligned}
\end{equation}
For the first identity, the new phase is
$\psi\bigl(a(x+b/2)+t((x+b)^2-x^2)/2\bigr)
=\psi\bigl((a+tb)(x+b/2)\bigr)$.
The second follows by relabelling $x$. For the third, the matrix coefficient of $|y\rangle$ in $W_wE_{a,b}W_w^*|x\rangle$ is
\[
    \frac1p\sum_{z\in\F_p}
    \psi\bigl((a+y-x)z+ab/2+by\bigr).
\]
Character orthogonality makes this zero unless $y=x-a$, in which case it is $\psi\bigl(b(x-a/2)\bigr)$, as required.

The label changes in \cref{eq:weil-weyl-covariance} are exactly multiplication of the column vector $(a,b)^T$ by $n(t)$, $h(u)$, and $w$. These matrices generate $\SL_2(\F_p)$ by \cref{eq:matrix-factorization}. A word for $g$ therefore implements $E_{a,b}\mapsto E_{a',b'}$, where $(a',b')^T=g(a,b)^T$. Two words for the same matrix give the same conjugation action on every $E_{a,b}$, so their operators differ by a scalar. Choosing one word for each $g$ now gives $W_gW_h=c(g,h)W_{gh}$, with $|c(g,h)|=1$. This proves the theorem over every odd prime-power field.
\end{proof}

We call these operations the \emph{quadratic phase} (also called a chirp), \emph{scaling}, and \emph{Fourier transform}. They suffice for every group element. Indeed, the matrix factorization in \cref{eq:matrix-factorization} reads
\begin{equation}\label{eq:bruhat-factorizations}
    \begin{pmatrix}a&b\\c&d\end{pmatrix}
    =\begin{cases}
        n(a/c)\,w\,h(-c)\,n(d/c),&c\neq0,\\
        h(a)n(b/a),&c=0.
    \end{cases}
\end{equation}
This is the two-by-two Bruhat factorization. It expresses every $W_g$, up to a scalar phase, as a product of at most four of the operators above.

\subsection{Why we restrict to the odd subspace}

Define the \emph{parity operator} by $R|x\rangle:=|-x\rangle$. All three operations commute with $R$. Thus the full register has even and odd invariant subspaces, and we cannot apply the irreducible transfer lemma to it. We use the \emph{odd subspace}
\[
    \mathcal H_p^-:=\ker(R+\mathds{1}).
\]
Choose a set $\mathcal R_p\subseteq\F_p^*$ containing exactly one representative of each pair $\{x,-x\}$.

\begin{lemma}[The odd subspace]\label{lem:odd-subspace}
The subspace $\mathcal H_p^-$ is invariant under the projective Weil representation. It has the orthonormal basis
\begin{equation*}
    v_x:=\frac{|x\rangle-|-x\rangle}{\sqrt2}
    \quad(x\in\mathcal R_p),\qquad
    \dim\mathcal H_p^-=\frac{p-1}{2}=N.
\end{equation*}
\end{lemma}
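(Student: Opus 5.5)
The plan has two parts: show that every generator commutes with the parity operator $R$, and then compute the basis of the odd subspace directly.

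\emph{Invariance.} I check commutation with $R$ on each of the three generating operators of \cref{thm:weil-representation}.
\begin{itemize}[leftmargin=*]
\item For the quadratic phase, $W_{n(t)}R|x\rangle=\psi(tx^2/2)|-x\rangle=RW_{n(t)}|x\rangle$, because $(-x)^2=x^2$.
\item For scaling, $W_{h(u)}R|x\rangle=\chi(u)|-u^{-1}x\rangle=RW_{h(u)}|x\rangle$.
\item For the Fourier transform, I substitute $y\mapsto -y$ in the sum. This gives $W_wR|x\rangle=p^{-1/2}\sum_y\psi(-xy)|y\rangle=p^{-1/2}\sum_y\psi(xy)|-y\rangle=RW_w|x\rangle$.
\end{itemize}
Every $W_g$ is a scalar phase times a product of these operators, by the Bruhat factorization \cref{eq:bruhat-factorizations} and the proof of \cref{thm:weil-representation}. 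Hence $W_gR=RW_g$ for all $g$, and each $W_g$ preserves $\ker(R+\mathds{1})$.

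\emph{Basis and dimension.} Since $p$ is odd, $x\neq -x$ for $x\neq0$. So $\F_p^*$ splits into $(p-1)/2$ disjoint pairs $\{x,-x\}$, and $\mathcal R_p$ has $N$ elements.
\begin{itemize}[leftmargin=*]
\item Each $v_x$ satisfies $Rv_x=-v_x$ and has norm one.
\item Distinct representatives give vectors with disjoint supports in the computational basis, so the $v_x$ are orthonormal.
\item For spanning, write $\phi=\sum_x\phi(x)|x\rangle$ with $R\phi=-\phi$. Then $\phi(-x)=-\phi(x)$ for every $x$, which forces $\phi(0)=0$. Hence $\phi=\sum_{x\in\mathcal R_p}\sqrt2\,\phi(x)v_x$.
\end{itemize}
This gives $\dim\mathcal H_p^-=N$.

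Nothing here is a serious obstacle. The only point needing care is that the statement concerns a projective representation defined through chosen words, so commutation with $R$ must be checked on the generators and then extended to every $W_g$. Scalar phases clearly do not affect invariance of a subspace.
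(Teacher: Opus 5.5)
Your proof is correct and follows the paper's own argument: you check that the quadratic phase, scaling, and Fourier operators each commute with $R$, extend this to every $W_g$ through the Bruhat factorization, and then pair $\F_p^*$ into $(p-1)/2$ classes $\{x,-x\}$. Your explicit spanning computation and your remark that scalar phases cannot affect invariance simply fill in details the paper states briefly.
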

\begin{proof}
The quadratic phase depends on $x^2$, and scaling sends $-x$ to $-u^{-1}x$. Both therefore commute with $R$. In the Fourier sum, substituting $y\mapsto-y$ gives $W_wR=RW_w$. The factorization above now proves invariance under every $W_g$.

An odd vector has zero coefficient at $0$ and opposite coefficients at $x$ and $-x$. The nonzero field elements form $(p-1)/2$ disjoint pairs, so the vectors $v_x$ form the claimed orthonormal basis.
\end{proof}

\subsection{Irreducibility: first diagonal, then scalar}

\begin{lemma}[Irreducibility of the odd Weil representation]\label{lem:odd-weil}
The projective Weil representation on $\mathcal H_p^-$ is irreducible.
\end{lemma}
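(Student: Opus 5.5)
We use the commutant criterion, \cref{lem:schur-commutant}. Let $T$ be an operator on $\mathcal H_p^-$ that commutes with $W_g$ for every $g\in\SL_2(\F_p)$. Projective phases do not affect commutation, so it is enough that $T$ commutes with the three generating operations $W_{n(t)}$, $W_{h(u)}$, and $W_w$ restricted to the odd subspace. As the subsection title suggests, we first show that $T$ is diagonal in the basis $\{v_x\}$, then show that its diagonal entries agree.

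\textbf{Diagonal.} On the odd basis, the quadratic phase acts by $W_{n(t)}v_x=\psi(tx^2/2)v_x$. This is well defined because the phase depends only on $x^2$. The eigenvalue functions $t\mapsto\psi(tx^2/2)$ for $x\in\mathcal R_p$ are pairwise distinct. Indeed, if $x^2\neq y^2$, then nondegeneracy of the trace pairing (\cref{lem:trace-pairing}) gives a $t$ with $\Tr\bigl(t(x^2-y^2)/2\bigr)\neq0$. Distinct representatives in $\mathcal R_p$ have distinct squares, since $x^2=y^2$ forces $y=\pm x$. So the joint eigenspaces of the commuting family $\{W_{n(t)}\}$ on $\mathcal H_p^-$ are the lines $\C v_x$. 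An operator commuting with all of them preserves each line, so $T$ is diagonal: $Tv_x=\tau_x v_x$.

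\textbf{Scalar.} Scaling alone only permutes the lines $\C v_x$ within the square classes: $W_{h(u)}v_x=\chi(u)v_{u^{-1}x}$, up to the sign of the representative. This identifies $\tau_x$ with $\tau_{u^{-1}x}$, so $\tau$ is constant on each of the two classes of $x^2$ (square or nonsquare). To connect the classes we use the Fourier transform. Compute the matrix entry
\[
\langle v_y,W_wv_x\rangle=\frac{1}{\sqrt p}\bigl(\psi(xy)-\psi(-xy)\bigr)=\frac{2i}{\sqrt p}\sin\!\Big(\frac{2\pi}{\ell}\Tr(xy)\Big).
\]
From $TW_w=W_wT$ we get $(\tau_y-\tau_x)\langle v_y,W_wv_x\rangle=0$. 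So it suffices to find $x,y\in\F_p^*$ with $\Tr(xy)\neq0$, with $x^2$ a square and $y$ chosen so that $y^2$ lies in the class we want to connect to. Since $x^2$ is always a square, the ``classes'' are really orbits of $\mathcal R_p$ under multiplication by squares. There are at most two such orbits, represented by $x$ with $\chi(x)=\pm1$ (modulo the sign $\chi(-1)$). We therefore need $x,y$ with $\chi(x)\neq\chi(y)$, up to sign, and $\Tr(xy)\notin\{0\}$. Note $\sin(2\pi k/\ell)\neq0$ for $k\not\equiv0$ because $\ell$ is odd.

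\textbf{Main obstacle.} The delicate step is the last one: exhibiting such a pair, or, more robustly, showing that the graph on $\mathcal R_p$ with edges where $\Tr(xy)\neq0$ is connected. I would argue directly. Fix $x$. The set $\{y:\Tr(xy)=0\}$ is a hyperplane of size $p/\ell$. Each square-class coset of $\F_p^*$ has size $(p-1)/2>p/\ell$ whenever $\ell\ge3$ and $p>3$. So each class contains a $y$ outside the hyperplane, which joins any $x$ to both classes. The small case $p=3$ (where $N=1$) is trivially irreducible. Hence all $\tau_x$ are equal, $T$ is scalar, and \cref{lem:schur-commutant} gives irreducibility.
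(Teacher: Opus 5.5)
Your proof is correct, but its second half takes a detour that the paper avoids, and the remark motivating the detour is false. You write that scaling only permutes the lines $\C v_x$ within square classes. In fact $h(u)\in\SL_2(\F_p)$ for every $u\in\F_p^*$, nonsquares included, and $W_{h(u)}v_x=\chi(u)v_{u^{-1}x}$, up to sign of the representative. For any $x,y\in\mathcal R_p$, the choice $u=x/y$ sends $\C v_x$ to $\C v_y$. So your own identity $\tau_x=\tau_{u^{-1}x}$ already gives $\tau_x=\tau_y$ for all pairs.

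That is exactly the paper's argument: the quadratic phases make $T$ diagonal, and the full diagonal torus, being transitive on $\F_p^*$, makes it scalar. The paper explicitly notes that the Fourier operator is not needed.

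Your Fourier step is nevertheless valid, so nothing breaks; it is only redundant. The matrix element $\langle v_y,W_wv_x\rangle=p^{-1/2}\bigl(\psi(xy)-\psi(-xy)\bigr)$ is right. It vanishes exactly when $\Tr(xy)=0$, because $\ell$ is odd. The count $(p-1)/2>p/\ell$ for $p>3$ does produce the required cross-class pair.

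The paper's route is shorter and gives a slightly stronger conclusion: the upper-triangular (Borel) subgroup, generated by the $n(t)$ and $h(u)$, already acts irreducibly on $\mathcal H_p^-$. Your route is the one you would need if only scalings by squares were available. It costs an extra matrix-element computation and a counting argument.
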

\begin{proof}
We show that every operator commuting with the representation is scalar. The quadratic phases force it to be diagonal in the odd basis. The scalings then force its diagonal entries to agree.

Let $T\in\End(\mathcal H_p^-)$ commute with every $W_{n(t)}$ and $W_{h(u)}$. Each $v_x$ is a joint eigenvector of the quadratic phases, with eigenvalue function $t\mapsto\psi(tx^2/2)$. Distinct representatives $x,y\in\mathcal R_p$ satisfy $x^2\neq y^2$. By \cref{lem:trace-pairing}, some $t$ therefore satisfies
\[
    \psi\!\left(t(x^2-y^2)/2\right)\neq1.
\]
Hence the joint eigenspaces are exactly the one-dimensional lines $L_{[x]}:=\C v_x$. Commutation with every quadratic phase forces $T$ to preserve each line, so $T$ is diagonal in the odd basis.

Scaling sends $L_{[x]}$ to $L_{[u^{-1}x]}$. For any two nonzero pairs, we can choose $u$ to send the first to the second. Commutation with every scaling therefore forces all diagonal entries of $T$ to agree. The commutant consists of scalars, so \cref{lem:schur-commutant} proves irreducibility.
\end{proof}

The Fourier operator is not needed for this irreducibility argument.

\subsection{Why the choice of lift does not matter}

Our graphs use $G_p=\PSL_2(\F_p)$, while the Weil operators act on lifts in $\SL_2(\F_p)$. The two lifts differ by $-\mathds{1}_2=h(-1)$. The scaling formula gives
\begin{equation*}
    W_{-\mathds{1}_2}=\chi(-1)R,\qquad
    W_{-\mathds{1}_2}\big|_{\mathcal H_p^-}
    =-\chi(-1)\mathds{1}_{\mathcal H_p^-}.
\end{equation*}
Thus changing the lift changes its restriction only by a scalar phase.

For each $g\in G_p$, choose a lift $\widetilde g\in\SL_2(\F_p)$ and a unitary representative of its projective Weil operator. Set
\begin{equation*}
    U_g:=W_{\widetilde g}\big|_{\mathcal H_p^-}.
\end{equation*}
The projective class of $U_g$ is independent of both choices. These operators give an irreducible projective representation of $G_p$, since they include, up to phases, all the restricted Weil operators.

Their conjugation maps give an ordinary representation. Define
\begin{equation*}
    \Gamma_p(g)(X):=U_gXU_g^*,\qquad X\in\End(\mathcal H_p^-).
\end{equation*}
Projective phases cancel, so $\Gamma_p(g)\Gamma_p(h)=\Gamma_p(gh)$. Cyclicity of trace gives
\[
    \langle\Gamma_p(g)(X),\Gamma_p(g)(Y)\rangle_{\HS}
    =\langle X,Y\rangle_{\HS}.
\]
Hence $\Gamma_p$ is a well-defined ordinary unitary representation on operator space. In particular, neither the choice of lift nor the choice of scalar phase affects the channel.

\begin{remark}[Origin of the representation]
The Weil representation is also called the finite oscillator representation. The operators $E_{a,b}$ describe translations and phase modulations. The covariance identities above implement linear changes of their labels that preserve the alternating form $ab'-ba'$. Gurevich, Hadani, and Sochen explain this oscillator picture over odd prime fields~\cite{GurevichHadaniSochen2008}. The proof above supplies the projective representation over every odd prime-power field, with our chosen signs and scaling convention.
\end{remark}

\section{Proof of the Ramanujan bound}\label{sec:quantum-expander}

We now combine the classical graph family with the odd Weil representation. For the generators $S_p$ in \cref{thm:morgenstern-family}, define
\begin{equation*}
    \Phi_p(X):=\frac1D\sum_{s\in S_p}U_sXU_s^*
    =\frac1D\sum_{s\in S_p}\Gamma_p(s)(X).
\end{equation*}

\begin{theorem}[Ramanujan quantum expander]\label{thm:ramanujan-quantum-expander}
For every even $n\geq2$ and $p=q^n$, the map $\Phi_p$ is a degree-$D$ mixed-unitary channel on $\mathcal H_p^-$, of dimension $N=(p-1)/2$. It satisfies
\begin{equation*}
    \left\|\Phi_p\big|_{\End_0(\mathcal H_p^-)}\right\|_{\HS\to\HS}
    \leq\lambda_D.
\end{equation*}
Consequently, every density matrix $\rho$ on $\mathcal H_p^-$ satisfies
\begin{equation*}
    \|\Phi_p(\rho)-\omega_N\|_{\HS}
    \leq\frac{2\sqrt{D-1}}D\|\rho-\omega_N\|_{\HS}.
\end{equation*}
\end{theorem}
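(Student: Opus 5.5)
The plan is to assemble three earlier results, with essentially no new computation. First, I will check that $\Phi_p$ is a degree-$D$ mixed-unitary channel on $\mathcal H_p^-$.

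\begin{itemize}
\item By \cref{thm:morgenstern-family}, $|S_p|=D$, and the elements of $S_p$ are distinct.
\item Each $U_s=W_{\widetilde s}\big|_{\mathcal H_p^-}$ is a unitary on $\mathcal H_p^-$. This holds because $W_{\widetilde s}$ is unitary on $\mathcal K_p$ by \cref{thm:weil-representation}, and $\mathcal H_p^-$ is an invariant subspace by \cref{lem:odd-subspace}. A unitary restricted to an invariant subspace of a finite-dimensional space is unitary there.
\item $\dim\mathcal H_p^-=N=(p-1)/2$, again by \cref{lem:odd-subspace}.
\item The channel is independent of the choice of lift and phase representative. Each term is $\Gamma_p(s)(X)$, and $\Gamma_p$ is well defined, as established just before the theorem.
\end{itemize}

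Next comes the spectral bound, which I will obtain by applying \cref{lem:spectral-transfer} with $G=G_p=\PSL_2(\F_p)$, $S=S_p$, $\lambda=\lambda_D$, and the projective representation $g\mapsto U_g$. The hypotheses are checked as follows.

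\begin{itemize}
\item $S_p=S_p^{-1}$ generates $G_p$, and the walk bound \cref{eq:generic-cayley-bound} with $\lambda=\lambda_D$ is exactly \cref{eq:classical-bound}. Both come from \cref{thm:morgenstern-family}.
\item $g\mapsto U_g$ is a projective unitary representation of $G_p$. For $g,h\in G_p$, the element $\widetilde g\widetilde h$ is a lift of $gh$, so it equals $\pm\widetilde{gh}$. The relation $W_{\widetilde g}W_{\widetilde h}=c\,W_{\widetilde g\widetilde h}$ then restricts to $\mathcal H_p^-$. There, $W_{-\mathds{1}_2}$ acts as the scalar $-\chi(-1)$, so $U_gU_h$ equals a unimodular scalar times $U_{gh}$.
\item This representation is irreducible. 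Every restricted Weil operator coincides up to phase with some $U_g$, so an operator commuting with all $U_g$ commutes with all restricted $W_{n(t)}$ and $W_{h(u)}$. By the proof of \cref{lem:odd-weil} such an operator is scalar, and \cref{lem:schur-commutant} then gives irreducibility.
\end{itemize}

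The transfer lemma therefore gives $\|\Phi_p(X)\|_{\HS}\le\lambda_D\|X\|_{\HS}$ for every traceless $X$, which is the claimed operator-norm bound.

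For the density-matrix statement, \cref{lem:spectral-form} identifies the optimal contraction constant in \cref{eq:density-contraction} with $\sigma_2(\Phi_p)$, and we have just shown $\sigma_2(\Phi_p)\le\lambda_D$. Directly: $\Phi_p$ is unital, so $\Phi_p(\rho)-\omega_N=\Phi_p(\rho-\omega_N)$, and $\rho-\omega_N$ is traceless.

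The only step needing care is the projective-representation check, namely that passing from $\SL_2$ lifts to $\PSL_2$ introduces only scalar phases on the odd subspace. This is already settled by the computation of $W_{-\mathds{1}_2}$ in the paper, so I expect no real obstacle.
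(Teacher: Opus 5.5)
Your proposal is correct and follows the paper's proof: unitarity of the restricted Weil operators makes $\Phi_p$ mixed-unitary, and the transfer lemma, fed with Morgenstern's walk bound and with irreducibility from \cref{lem:odd-weil} together with the $W_{-\mathds{1}_2}$ lift computation, gives the bound on traceless operators. The density-matrix statement then follows from \cref{lem:spectral-form}, and your explicit checks of the projective multiplication law and of irreducibility after passing to $\PSL_2$ simply spell out what the paper asserts just before the theorem.
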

\begin{proof}
Each $U_s$ is unitary, so $\Phi_p$ is a mixed-unitary channel. By \cref{thm:morgenstern-family}, the classical walk has norm at most $\lambda_D$ on mean-zero functions. By \cref{lem:odd-weil} and the lift calculation above, $g\mapsto U_g$ is an irreducible projective representation of $G_p$. The transfer lemma gives the claimed bound on traceless operators. The density-matrix statement follows from \cref{lem:spectral-form}.
\end{proof}

\section{Exact circuits and classical preprocessing}\label{sec:quantum-circuit}

We first implement a supplied generator using its factorization into at most four elementary Weil operators. We then assemble the channel. Finally, we show how to compute the field model, generator matrices, and circuit coefficients deterministically. This separates the cost of a channel use from the cost of preparing its circuit description.

\subsection{The register, gate set, and input promise}

The fixed parameters $q=\ell^r$ and $D=q+1$, a model of $\F_q/\F_\ell$, a nonsquare $\nu\in\F_q^*$, and the list $\mathcal T_q$ from \cref{eq:basic-norm-t-generators} are built into the compiler. We can obtain the list by checking all $q^2$ pairs in the fixed field. No $n$-dependent advice is allowed. An $\F_\ell$-operation means an addition, multiplication, inversion, or zero test in this fixed field. The notation $\widetilde O$ suppresses polylogarithmic factors.

We store $\F_p=\F_q[t]/(g_n)$ in the basis obtained from a fixed $\F_\ell$-basis of $\F_q$ and the powers $1,t,\ldots,t^{n-1}$. The register has
\begin{equation}\label{eq:register-size}
    m=rn=\log_\ell p=\log_\ell(2N+1)=\Theta(\log N)
\end{equation}
$\ell$-level qudits. The full register has dimension $p$, while the channel acts on its $N$-dimensional odd subspace $\mathcal H_p^-$. We assume that the input is already supported on this subspace. Encoding and decoding costs are not included.

Our exact gate library contains all computational-basis permutations on at most three $\ell$-level qudits and the one-qudit gates
\[
    F_{\F_\ell}|a\rangle:=\frac1{\sqrt\ell}
        \sum_{b\in\F_\ell}e^{2\pi iab/\ell}|b\rangle,
    \qquad
    Z_\ell|a\rangle:=e^{2\pi ia/\ell}|a\rangle.
\]
We also include controlled versions of these gates and a fixed-size preparation of $D^{-1/2}\sum_{j=1}^D|j\rangle$. The $D$ control labels occupy a subspace of $\lceil\log_\ell D\rceil$ qudits, and the controlled operators act as the identity on unused labels. Since $q,D$, and $\ell$ are fixed, this is a finite gate library independent of $n,p$, and $N$. By an exact \emph{Stinespring circuit} we mean a unitary circuit on the data and an environment, followed by discarding the environment, that realizes the channel exactly. Exactness refers to this declared library, not to exact synthesis over an arbitrary universal gate set.

\begin{theorem}[Uniform exact channel circuit]\label{thm:kraus-complexity}
For every even $n\geq2$, a deterministic compiler outputs an exact Stinespring circuit for $\Phi_p$ over the gate library above. On inputs supported on $\mathcal H_p^-$, the circuit uses
\begin{equation*}
\begin{aligned}
    \text{gate count and depth}&:\quad O(m^2)=O(\log^2N),\\
    \text{ancilla qudits}&:\quad O(m)=O(\log N).
\end{aligned}
\end{equation*}
The preprocessing uses $O(m^5)$ operations over the fixed field $\F_\ell$, or $\widetilde O(m^5)$ bit operations. The hidden constants depend only on $q,\ell,D$, and the fixed local gate library. Encoding into and decoding from $\mathcal H_p^-$ are not included.
\end{theorem}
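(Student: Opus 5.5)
The circuit uses a control register that selects the generator index $j\in\{1,\dots,D\}$, followed by a controlled application of $U_{s_j}$ to the data register, and then discards the control. The compiler first runs the preprocessing of \cref{thm:morgenstern-family}. This outputs the field model $\F_p=\F_q[t]/(g_n)$, the roots $b_n,z_n$, and $\SL_2(\F_p)$-lifts $\widetilde s_j$ of the $D$ generators, all in $O(n^5)=O(m^5)$ $\F_q$-operations, hence $O(m^5)$ $\F_\ell$-operations since $q$ is fixed. For each lift it then computes the Bruhat factorization \cref{eq:bruhat-factorizations}. This needs at most a few field inversions and multiplications per generator, each costing $\widetilde O(m^2)$ or less. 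The result is the scalars $a/c,\,-c,\,d/c$ (or $a,\,b/a$) that parametrize at most four elementary Weil operators. The Stinespring circuit is then: prepare $D^{-1/2}\sum_j|j\rangle$ in the environment; for each $j$, apply the controlled product of elementary operators for $\widetilde s_j$; discard the environment. Tracing out the orthogonal control labels gives exactly $\frac1D\sum_j U_{s_j}XU_{s_j}^*$. By \cref{lem:odd-subspace} and the lift discussion, the global phases of the chosen representatives are irrelevant. The Weil operators preserve $\mathcal H_p^-$, so inputs supported there stay there.

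The gate count is controlled by compiling each elementary operator with classically known parameters into $O(m^2)$ gates and $O(m)$ ancillas. Since $D$ is constant, the $D$ controlled blocks add only a constant factor. \emph{Scaling} $W_{h(u)}$: the phase $\chi(u)$ is a known sign, so it is a constant-size gate (or can be dropped as a global phase per block, but under control it must be kept, e.g.\ a controlled $Z_\ell$-power pattern; for $\ell$ odd, $-1$ is not a power of $e^{2\pi i/\ell}$, so I will instead absorb it by choosing the representative operator to omit it, which is legitimate because projective phases cancel in the channel). The relabeling $x\mapsto u^{-1}x$ is an invertible $\F_\ell$-linear map on $\F_\ell^m$ given by a known $m\times m$ matrix. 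I implement it in place by Gaussian elimination into $O(m^2)$ elementary row operations $x_i\mathrel{+}=c\,x_k$ and $x_i\mathrel{*}=c$, each a two- or one-qudit basis permutation. \emph{Fourier transform} $W_w$: $\psi(xy)=e^{2\pi i\Tr(xy)/\ell}$ with $\Tr(xy)=x^T Qy$ for the known nondegenerate trace-form matrix $Q$. So $W_w$ equals $F_{\F_\ell}^{\otimes m}$ composed with the linear relabeling $y\mapsto Q^{-T}y$ (or its inverse, depending on convention), again $O(m)+O(m^2)$ gates. \emph{Quadratic phase} $W_{n(t)}$: $\Tr(tx^2/2)$ is a known quadratic form $\sum_{i\le k}\beta_{ik}x_ix_k$ over $\F_\ell$. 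I write the phase as a product of $Z_\ell$-powers on single qudits and $e^{2\pi i\beta x_ix_k/\ell}$ on pairs. The latter I implement by computing $\beta x_ix_k$ into an ancilla with a three-qudit permutation, applying $Z_\ell$, and uncomputing, for $O(m^2)$ gates. Controlled versions follow from the controlled gates in the library, with the control condition ``label $=j$'' computed once into one ancilla.

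The main obstacle is bookkeeping, not conceptual. I need to confirm that every field constant entering the circuits ($Q$, the multiplication matrices for $u^{-1}$, and the coefficients $\beta_{ik}$, which require $\Tr$ of basis products $t^it^k/2$) is computable in $O(m^5)$ total. Computing all $\Tr(e_ie_k)$ naively takes $O(m^2)$ products and traces, each at most $\widetilde O(m^3)$, which fits. I also need to check that the sign $\chi(u)$ and the normalization of $W_w$ never have to be synthesized, by choosing phase-free representatives. Depth $O(m^2)$ follows trivially from the gate count.
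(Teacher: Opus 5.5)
Your proposal is correct and follows essentially the paper's proof: a uniform control superposition, controlled Bruhat products of at most four elementary Weil operators (scaling and the trace-form relabeling by Gaussian elimination, the additive Fourier transform as $F_{\F_\ell}^{\otimes m}$ composed with that relabeling, the quadratic phase by compute/$Z_\ell$/uncompute), per-branch phases such as $\chi(u)$ dropped because the partial trace removes cross terms, and preprocessing taken from the field-construction and Morgenstern lemmas. The differences are cosmetic: you apply the quadratic phase monomial by monomial instead of computing $\Tr(tx^2/2)$ into one ancilla by field arithmetic, and the diagonal terms $\beta_{ii}x_i^2$ are not $Z_\ell$-powers as you suggest, though your own compute/$Z_\ell$/uncompute step, using a two-qudit permutation, handles them.
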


We prove the circuit bounds next. The preprocessing bound is \cref{lem:uniform-compiler}, proved in \cref{sec:classical-preprocessing}.

\subsection{Implementing the three elementary operators}

For now, suppose we have the field model, a lifted generator, and its factorization coefficients from \cref{eq:bruhat-factorizations}. Each factor is a quadratic phase, a scaling, or a Fourier transform. We implement each with $O(m^2)$ gates and $O(m)$ ancillas.

\paragraph{Reversible field arithmetic.}
The basic arithmetic operations are
\begin{equation*}
\begin{aligned}
    A_a:&\quad |x\rangle\longmapsto|x+a\rangle,\\
    M_a:&\quad |x\rangle\longmapsto|ax\rangle\qquad(a\neq0),\\
    \operatorname{MA}:&\quad |x\rangle|y\rangle|u\rangle
        \longmapsto|x\rangle|y\rangle|u+xy\rangle,\\
    \operatorname{SQ}:&\quad |x\rangle|u\rangle
        \longmapsto|x\rangle|u+x^2\rangle,\\
    T_a:&\quad |x\rangle|c\rangle
        \longmapsto|x\rangle|c+\Tr(ax)\rangle\qquad(c\in\F_\ell).
\end{aligned}
\end{equation*}
Each map is bijective, with inverse given by subtraction or multiplication by $a^{-1}$. Schoolbook polynomial multiplication uses $O(n^2)=O(m^2)$ base-field multiply-adds. Reduction modulo $g_n$ is a linear map $L$ on the unreduced coefficients. We perform it reversibly by retaining those coefficients and applying $(v,u)\mapsto(v,u+Lv)$. This uses $O(m^2)$ controlled additions. Computing into a target and reversing the temporary work gives $O(m^2)$ gates and depth and $O(m)$ ancillas for each map above. Fixed field multiplication and the trace are linear maps in the chosen coordinates and obey the same bounds.

\paragraph{Quadratic phase.}
To apply $W_{n(t)}$, compute $Q_t(x):=\Tr(tx^2/2)$ in an ancilla, apply $Z_\ell$, and reverse the computation:
\begin{equation*}
\begin{aligned}
    |x\rangle|0\rangle
    &\longmapsto |x\rangle|Q_t(x)\rangle\\
    &\longmapsto \psi(tx^2/2)|x\rangle|Q_t(x)\rangle\\
    &\longmapsto \psi(tx^2/2)|x\rangle|0\rangle.
\end{aligned}
\end{equation*}
The arithmetic bounds give $O(m^2)$ gates and depth and $O(m)$ ancillas.

\paragraph{Scaling.}
The operator $W_{h(u)}$ multiplies the field label by the fixed element $u^{-1}$. This is an invertible linear map on $m$ coordinates. Gaussian elimination writes it as $O(m^2)$ elementary coordinate operations, each implemented by a local permutation gate. Its scalar factor $\chi(u)$ is a global phase and may be omitted when we implement the channel.

\paragraph{Fourier transform.}
This is the Fourier transform of the \emph{additive group} of the field, not of $\PSL_2(\F_p)$. In field coordinates it is a linear permutation followed by $m$ fixed-size Fourier transforms. To see this, let $e_1,\ldots,e_m$ be our $\F_\ell$-basis of $\F_p$ and set $B_{ij}:=\Tr(e_ie_j)$. The trace-pairing lemma makes $B$ invertible. For coordinate vectors $x,y$,
\[
    \Tr(xy)=x^TBy=(B^Tx)\cdot y.
\]
Consequently,
\begin{equation*}
    F_{\F_p}=F_{\F_\ell}^{\otimes m}L_{B^T},
    \qquad L_{B^T}|x\rangle:=|B^Tx\rangle.
\end{equation*}
Indeed, the coefficient of $|y\rangle$ on the right is $p^{-1/2}\psi(xy)$, as required. Gaussian elimination implements $L_{B^T}$ with $O(m^2)$ gates. The remaining $m$ gates are copies of $F_{\F_\ell}$. Thus $W_w=F_{\F_p}$ has the claimed cost.

\subsection{From a generator circuit to the channel}

\begin{proof}[Proof of \cref{thm:kraus-complexity}]
By \cref{eq:bruhat-factorizations}, each lifted generator uses at most four elementary Weil operators. The constructions above therefore implement it, up to a scalar phase, with $O(m^2)$ gates and depth and $O(m)$ ancillas. We reuse the workspace between factors.

To apply the channel, choose a generator uniformly, apply its unitary, and forget the choice. This uses one exact $D$-ary random symbol. For a unitary circuit with an environment, use the control register $\mathcal C:=\C^D$ and prepare
\[
    |\Omega_D\rangle:=\frac1{\sqrt D}\sum_{s=1}^D|s\rangle.
\]
Let $\widehat U_s$ be the implemented unitary on the full register $\mathcal K_p$. Its restriction to $\mathcal H_p^-$ equals $U_s$ up to a scalar phase. Apply
\[
    V:=\sum_{s=1}^D|s\rangle\!\langle s|\otimes\widehat U_s
\]
and discard $\mathcal C$. For every input $\rho$ supported on $\mathcal H_p^-$,
\begin{equation*}
    \tr_{\mathcal C}\!\left[
        V(|\Omega_D\rangle\!\langle\Omega_D|\otimes\rho)V^*
    \right]
    =\frac1D\sum_{s=1}^D U_s\rho U_s^*
    =\Phi_p(\rho).
\end{equation*}
The partial trace removes terms with unequal control labels. Each remaining term is unchanged by a scalar phase on its generator. This also explains why we may omit the scalar phases in the elementary operators.

Since $D$ is fixed, controlling all $D$ circuits changes only the constants in the gate and space bounds. Each completed Weil operator preserves $\mathcal H_p^-$, although the primitive arithmetic gates need not preserve it at intermediate times. The preprocessing bound follows from \cref{lem:uniform-compiler} below, and \cref{eq:register-size} converts $m$ to $N$.
\end{proof}

\subsection{Approximation over another gate set}\label{sec:approximate-implementation}

The exact circuit uses the gate set specified above. We now separate two claims about replacing it by another gate set. Local gate synthesis approximates the \emph{encoded channel}, allowing small error outside the odd subspace. To use the mixed-unitary spectral bound below, we also require the approximating Kraus operators to be unitaries on the same odd subspace.

Let $\iota:\mathcal H_p^-\hookrightarrow\mathcal K_p$ be the inclusion and write $\mathcal J(X):=\iota X\iota^*$. We measure output error in the trace norm $\|X\|_1:=\tr\sqrt{X^*X}$, allowing the input to be entangled with any reference register $\mathcal R$.

\begin{lemma}[Approximation of the encoded channel]\label{lem:encoded-approximation}
Fix an inverse-closed finite universal gate set with efficiently computable entries on the constant-size local registers. For $0<\varepsilon<1/2$, the exact circuit in \cref{thm:kraus-complexity} can be replaced by a circuit with
\[
    O\!\left(m^2\log^{1.441}\!\left(\frac{m^2}{\varepsilon}\right)\right)
\]
gates and $O(m)$ ancillas. Let $\widetilde\Phi_{p,\varepsilon}$ denote the resulting channel from the logical input $\mathcal H_p^-$ to the physical output $\mathcal K_p$. For every reference register $\mathcal R$ and density matrix $\rho$ on $\mathcal H_p^-\otimes\mathcal R$,
\[
    \left\|
        (\widetilde\Phi_{p,\varepsilon}\otimes\operatorname{id}_{\mathcal R})(\rho)
        -((\mathcal J\circ\Phi_p)\otimes\operatorname{id}_{\mathcal R})(\rho)
    \right\|_1\leq\varepsilon.
\]
The input is already encoded, as in the exact circuit theorem. The hidden constant also depends on the synthesis gate set.
\end{lemma}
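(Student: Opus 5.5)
The plan is to apply Solovay--Kitaev synthesis gate by gate, bound the accumulated error in operator norm for the full Stinespring unitary, and then pass to a diamond-norm (hence trace-norm with reference) bound on the channel.

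\textbf{Step 1: replace the gates.} The exact circuit of \cref{thm:kraus-complexity} consists of $G=O(m^2)$ gates from the fixed library. Each gate acts on a constant number of $\ell$-level qudits. Since the library is finite and fixed, so is the set of local unitaries to approximate. By the Solovay--Kitaev theorem in the improved form of~\cite{Kuperberg2023BreakingCubic}, each gate $g_i$ has an approximation $\widetilde g_i$ over the chosen inverse-closed universal set with
\[
\|\widetilde g_i-g_i\|_{\mathrm{op}}\le \delta
\]
using $O(\log^{1.441}(1/\delta))$ gates; the constant depends only on the synthesis set. The preparation of $|\Omega_D\rangle$ is also a fixed-size unitary on the control register, so it is approximated the same way. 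No new ancillas are introduced beyond the constant per gate, which can be reused, so the ancilla count remains $O(m)$.

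\textbf{Step 2: accumulate the error.} The triangle inequality and unitary invariance (the telescoping hybrid argument) give $\|\widetilde V-V\|_{\mathrm{op}}\le G\delta$ for the total unitary acting on control, data, workspace, and reference. Choosing $\delta=\varepsilon/(2G)$ yields $\|\widetilde V-V\|_{\mathrm{op}}\le\varepsilon/2$ with total gate count
\[
O\!\left(m^2\log^{1.441}(m^2/\varepsilon)\right),
\]
since $\log(2G/\varepsilon)=O(\log(m^2/\varepsilon))$ for $\varepsilon<1/2$.

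\textbf{Step 3: pass to channels.} For pure inputs, $\|\widetilde V\psi\psi^*\widetilde V^*-V\psi\psi^*V^*\|_1\le2\|(\widetilde V-V)\psi\|\le\varepsilon$. Convexity extends this to mixed $\rho$ on $\mathcal H_p^-\otimes\mathcal R$, with the control and workspace initialized to fixed states. The partial trace over the environment (control and returned ancillas) is contractive in trace norm. The exact circuit outputs $(\mathcal J\circ\Phi_p\otimes\mathrm{id})(\rho)$ on $\mathcal K_p\otimes\mathcal R$ by the computation in the proof of \cref{thm:kraus-complexity}. That computation must be applied with the reference attached, which works because $V$ acts trivially on $\mathcal R$ and the scalar phases still cancel in the diagonal control terms.

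\textbf{Main obstacle.} The main subtlety is the bookkeeping in Step 3. The exact circuit's ancillas return to $|0\rangle$ only on inputs supported on $\mathcal H_p^-$, so the comparison must be made to the exact channel composed with $\mathcal J$. The approximate output lives in $\mathcal K_p$, which is why the statement is phrased with $\mathcal J$. Also, one must use $2\|(\widetilde V-V)\psi\|$ rather than a naive bound to obtain $\varepsilon$ instead of $2\varepsilon$; this is what forces $\varepsilon/2$ operator-norm accuracy.
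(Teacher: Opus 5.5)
Your proposal is correct and follows essentially the same route as the paper. You synthesize each of the $L=O(m^2)$ gates to accuracy $\varepsilon/(2L)$ using Kuperberg's Solovay--Kitaev bound, telescope to operator-norm error $\varepsilon/2$ for the full Stinespring unitary, convert this to trace-norm error at most $2\|\widetilde C-C\|\le\varepsilon$ on inputs that include the ancillas and an arbitrary reference, and then discard the environment. The one detail the paper makes explicit and you skip is that Solovay--Kitaev approximates only up to a scalar phase; the paper handles this by synthesizing each (controlled) gate as a whole, so the phases combine into a single global phase that the channel cannot see.
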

\begin{proof}
Write the full exact Stinespring circuit, including control preparation and workspace, as a product of $L=O(m^2)$ local gates. Kuperberg's Solovay--Kitaev theorem synthesizes a unitary in fixed dimension to operator-norm error $\eta$ with $O(\log^a(1/\eta))$ gates for every fixed $a>\log_\varphi 2=1.44042\ldots$, where $\varphi=(1+\sqrt5)/2$~\cite[Theorem~1.1]{Kuperberg2023BreakingCubic}. We use $a=1.441$ and $\eta=\varepsilon/(2L)$. We synthesize each controlled gate as a whole. Each primitive's scalar phase is then a global phase of the full circuit and may be ignored.

Let $C$ and $\widetilde C$ be the exact and synthesized full circuits. Telescoping their products gives, after choosing a global phase,
\[
    \|\widetilde C-C\|\leq L\eta=\varepsilon/2.
\]
For any input density matrix $\rho_0$, including initialized ancillas and an arbitrary reference register, we therefore have
\[
\begin{aligned}
    \|\widetilde C\rho_0\widetilde C^*-C\rho_0C^*\|_1
    &\leq\|(\widetilde C-C)\rho_0\widetilde C^*\|_1
        +\|C\rho_0(\widetilde C^*-C^*)\|_1\\
    &\leq2\|\widetilde C-C\|\leq\varepsilon.
\end{aligned}
\]
Here the circuits act as the identity on the reference register. Discarding the environment cannot increase this distance. The gate count is $O(L\log^{1.441}(L/\varepsilon))$, and local synthesis uses the same registers.
\end{proof}

The synthesized circuit need not preserve $\mathcal H_p^-$ exactly. Thus the lemma does not assert a spectral bound for a mixed-unitary channel on that subspace. Nor can we use the full physical register instead: the exact full-register channel fixes parity $R$, and hence the nonzero traceless operator $R-p^{-1}\mathds{1}$.

\begin{lemma}[Spectral stability on a fixed logical space]\label{lem:logical-spectral-stability}
Suppose $\widetilde U_s$ are unitaries on $\mathcal H_p^-$ and, after a choice of scalar phases, satisfy $\|\widetilde U_s-U_s\|\leq\delta$ for all $s\in S_p$. Then the degree-$D$ mixed-unitary channel
\[
    \widetilde\Phi(X):=\frac1D\sum_{s\in S_p}
        \widetilde U_sX\widetilde U_s^*
\]
satisfies $\sigma_2(\widetilde\Phi)\leq\lambda_D+2\delta$.
\end{lemma}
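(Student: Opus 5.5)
Since both channels are unital mixed-unitary channels on $\mathcal H_p^-$, I will compare them directly as superoperators on $\End(\mathcal H_p^-)$ with the Hilbert--Schmidt operator norm, then restrict to traceless operators. First, scalar phases are irrelevant: replacing $\widetilde U_s$ by $e^{i\theta}\widetilde U_s$ does not change $X\mapsto\widetilde U_sX\widetilde U_s^*$. So I may assume the phases are already chosen so that $\|\widetilde U_s-U_s\|\le\delta$ in operator norm for every $s$. Both $\Phi_p$ and $\widetilde\Phi$ fix $\mathds 1$ and preserve $\End_0(\mathcal H_p^-)$, because unitary conjugation preserves trace. Hence $\widetilde\Phi|_{\End_0}=\Phi_p|_{\End_0}+(\widetilde\Phi-\Phi_p)|_{\End_0}$, and the triangle inequality gives
\[
\sigma_2(\widetilde\Phi)\le\sigma_2(\Phi_p)+\|\widetilde\Phi-\Phi_p\|_{\HS\to\HS}.
\]

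The first term is at most $\lambda_D$ by \cref{thm:ramanujan-quantum-expander}. For the second, I telescope each conjugation term:
\[
\widetilde U_sX\widetilde U_s^*-U_sXU_s^*=(\widetilde U_s-U_s)X\widetilde U_s^*+U_sX(\widetilde U_s-U_s)^*.
\]
I then use the ideal inequality $\|AXB\|_{\HS}\le\|A\|\,\|X\|_{\HS}\,\|B\|$ for operator norms. With unitarity of $U_s$ and $\widetilde U_s$, each of the two pieces has Hilbert--Schmidt norm at most $\delta\|X\|_{\HS}$. Averaging over the $D$ generators gives $\|\widetilde\Phi(X)-\Phi_p(X)\|_{\HS}\le2\delta\|X\|_{\HS}$. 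Combining the two bounds gives $\sigma_2(\widetilde\Phi)\le\lambda_D+2\delta$.

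There is no real obstacle here. The points that need care are:
\begin{itemize}
\item the phase convention: the hypothesis is stated only after a choice of phases, and the conjugation maps do not see those phases;
\item the fact that the restriction to $\End_0$ is legitimate for both channels, which holds because both are trace-preserving;
\item the Hilbert--Schmidt ideal inequality, which I will justify in one line from $\tr(X^*A^*AX)\le\|A\|^2\tr(X^*X)$ and the analogous bound on the right.
\end{itemize}
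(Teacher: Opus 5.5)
Your proposal is correct and is essentially the paper's proof. Both split each term as $(\widetilde U_s-U_s)X\widetilde U_s^*+U_sX(\widetilde U_s-U_s)^*$ and use the Hilbert--Schmidt ideal inequality to get the $2\delta\|X\|_{\HS}$ perturbation bound, then add it to the exact $\lambda_D$ contraction on traceless $X$.
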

\begin{proof}
For every operator $X$ on $\mathcal H_p^-$,
\[
\begin{aligned}
    \|\widetilde U_sX\widetilde U_s^*-U_sXU_s^*\|_{\HS}
    &\leq\|(\widetilde U_s-U_s)X\widetilde U_s^*\|_{\HS}
       +\|U_sX(\widetilde U_s^*-U_s^*)\|_{\HS}\\
    &\leq2\delta\|X\|_{\HS}.
\end{aligned}
\]
Average over $s$ and apply the exact contraction bound to traceless $X$. This gives $\|\widetilde\Phi(X)\|_{\HS}\leq(\lambda_D+2\delta)\|X\|_{\HS}$, as claimed.
\end{proof}

In particular, unitary approximations of error $\varepsilon/2$ give the additive spectral error $\varepsilon$.

\subsection{Deterministic classical preprocessing}\label{sec:classical-preprocessing}

We have shown how to build the circuit from a field model and a list of generator matrices. We finish by computing these data. The only extra condition on the field model is that the residue of $t$ must be a square with a known square root.

\paragraph{Finding a square that generates the field.}
We need a square $\alpha$ that lies in no proper subfield of $\F_{q^n}$ containing $\F_q$. We choose $\alpha=\beta^2$. The danger is that squaring $\beta$ places it in a smaller field. If $F:x\mapsto x^q$ is the Frobenius map, then
\[
    \beta^2\in\F_{q^d}
    \quad\Longrightarrow\quad
    (F^d\beta)^2=\beta^2
    \quad\Longrightarrow\quad
    F^d\beta=\pm\beta.
\]
Thus it suffices to choose $\beta$ outside the kernels of $F^d-\mathds{1}$ and $F^d+\mathds{1}$ for $1\leq d\leq n/2$. These are linear subspaces. We first show that they do not cover the field, then find a point outside them by fixing one coordinate at a time.

\begin{proof}[Proof of \cref{lem:square-irreducible-moduli}]
Shoup's deterministic algorithm constructs a monic irreducible $h\in\F_q[Y]$ of degree $n$ in $\widetilde O(n^4)$ bit operations because $q$ is fixed~\cite[Theorems~3.2 and~4.1]{Shoup1990Irreducible}. Theorem~4.1 covers nonprime base fields. Shoup's notation $x^{\epsilon}$ denotes a fixed polynomial in $\log x$, so his bound has the stated $\widetilde O(n^4)$ form. Work temporarily in $K:=\F_q[Y]/(h)$ and represent the $\F_q$-linear map $F:x\mapsto x^q$ by an $n\times n$ matrix. Put
\[
    W_{d,\varepsilon}:=\ker(F^d-\varepsilon\mathds{1}),
    \qquad 1\leq d\leq n/2,\quad\varepsilon\in\{1,-1\}.
\]
Every element of $W_{d,\varepsilon}$ is a root of $X^{q^d}-\varepsilon X$, so $|W_{d,\varepsilon}|\leq q^d$. Hence
\[
    \sum_{d=1}^{n/2}\sum_{\varepsilon\in\{1,-1\}}
    |W_{d,\varepsilon}|
    \leq n q^{n/2}<q^n=|K|.
\]
The strict inequality follows from $n<q^{n/2}$, valid for even $n\geq2$ and $q\geq3$. In particular, the subspaces do not cover $K$.

To find a point outside their union, fix coordinates on $K$. For a prefix $\pi$, let $C(\pi)$ be the sum, over all the subspaces $W_{d,\varepsilon}$, of the number of their elements extending $\pi$. Gaussian elimination computes each count: an inconsistent system contributes zero, and a solution space of dimension $j$ contributes $q^j$. We have $C(\varnothing)<q^n$ and
\[
    C(\pi)=\sum_{a\in\F_q}C(\pi a).
\]
If $C(\pi)<q^{n-|\pi|}$, at least one choice of $a$ satisfies $C(\pi a)<q^{n-|\pi|-1}$. Choose such an $a$ at each step. After $n$ steps the count is an integer smaller than one, so the resulting $\beta$ lies in none of the subspaces. There are $n$ steps, $q=O(1)$ choices per step, and $n$ subspaces to check. Each count uses an $O(n^3)$ rank computation. The total is $O(n^5)$ field operations.

Set $\alpha:=\beta^2$. The proper subfields of $K$ containing $\F_q$ are the $\F_{q^d}$ with $d\mid n$ and $d<n$, hence $d\leq n/2$. If $\alpha$ lay in one of them, the implication above would give $F^d\beta=\pm\beta$, contrary to our choice. Thus $\alpha$ has degree $n$ over $\F_q$. Also $\alpha\neq0,1$, since $0,1,-1\in W_{1,1}$. Let $g_n$ be the minimal polynomial of $\alpha$ and express $\beta$ in the basis $1,\alpha,\ldots,\alpha^{n-1}$. Its coordinate polynomial $b_n$ satisfies $b_n^2=\overline t$ in $\F_q[t]/(g_n)$.

It remains to find a square root of $\nu$. Since $n$ is even, $K$ contains $\F_{q^2}$. Its nonzero trace-zero elements satisfy $y^q=-y$, so $\ker(F+\mathds{1})$ is nonzero. Choose a nonzero $y$ in this kernel by Gaussian elimination. Then $y^2\in\F_q^*$ and $y^2$ is a nonsquare there. Otherwise a square root in $\F_q$ would force $y\in\F_q$, contradicting $y^q=-y$. Thus $\nu/y^2$ is a square in the fixed field $\F_q$. Enumerating its $q=O(1)$ elements finds $c$ with $c^2=\nu/y^2$. The coordinate polynomial of $cy$ in the $\alpha$-basis is $z_n$, and $z_n^2=\nu$.

Computing $F$, its powers, $g_n$, and the basis conversions uses $O(n^4)$ further operations. Shoup's initial routine also fits within the stated $O(n^5)$ bound for fixed $q$. This proves the lemma.
\end{proof}

\paragraph{Computing the generator matrices and circuit coefficients.}
The square roots now give explicit determinant-one matrices. We then compute the linear and quadratic forms used by the circuits.

\begin{lemma}[Uniform classical compiler]\label{lem:uniform-compiler}
Given $1^n$ for even $n\geq2$, the compiler outputs $g_n$, the $D$ lifted generators in $\SL_2(\F_p)$, their factorization coefficients, the trace-pairing matrix, the quadratic-phase coefficients, and the complete gate lists. It uses $O(m^5)$ operations over the fixed field $\F_\ell$. Its bit complexity is $\widetilde O(m^5)$, and hence at most $O(m^6)$.
\end{lemma}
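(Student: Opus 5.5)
The compiler runs in stages, and the total cost is bounded by adding the stages. Since $m=rn$ with $r$ fixed, $O(n^k)$ operations over $\F_q$ cost $O(m^k)$ operations over $\F_\ell$: each $\F_q$ operation is a constant number of $\F_\ell$ operations in the fixed model of $\F_q/\F_\ell$.

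\textbf{Stage 1: field model.} Run the algorithm of \cref{lem:square-irreducible-moduli} on input $1^n$. It returns $g_n$, $b_n$ and $z_n$ using $O(n^5)=O(m^5)$ field operations. This is the dominant term, and it has already been proved.

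\textbf{Stage 2: lifted generators.} For each of the $q+1$ pairs $(c,d)\in\mathcal T_q$ in the built-in list, use $z_n$ to write $\xi_{c,d}$ as a $2\times2$ matrix over $k_n$ via the splitting \cref{eq:quaternion-splitting}. Then multiply by $b_n^{-1}$ as in \cref{eq:uniform-lifts} to obtain determinant one. This takes $O(1)$ field operations per generator. Each multiplication in $\F_p$ costs $O(n^2)$, and each inversion costs $O(n^3)$ by extended Euclid or by solving a linear system. The stage total is $O(m^3)$.

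\textbf{Stage 3: factorization coefficients.} For each lifted generator, test whether $c=0$ and read off the coefficients of \cref{eq:bruhat-factorizations}: $a/c$, $-c$ and $d/c$, or $a$ and $b/a$. This is again $O(1)$ inversions per generator, so $O(m^3)$ in total.

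\textbf{Stage 4: circuit data.} Compute the trace-pairing matrix $B_{ij}=\Tr(e_ie_j)$. Each $\Tr$ is an $\F_\ell$-linear functional, and its coefficient vector is obtained by tracing the $m$ basis vectors. The Frobenius $x\mapsto x^\ell$ is a linear map computable as a matrix in $O(m^3)$ operations, and summing its $m$ powers applied to a vector costs $O(m^3)$ per vector, or $O(m^4)$ in total. A simpler route is to compute $\Tr(e_k)$ as the trace of the multiplication matrix of $e_k$, which is $O(m^2)$ per element. Then $B=\bigl(\Tr(e_ie_j)\bigr)$ follows from the multiplication table in $O(m^3)$.

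The quadratic-phase coefficient for $t$ is the quadratic form $x\mapsto\Tr(tx^2/2)$. Its coefficients are $\Tr\bigl(t e_ie_j/2\bigr)$, which is $O(m^3)$ per phase and $O(1)$ phases.

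The gate lists require Gaussian elimination of the scaling maps $M_{u^{-1}}$ and of $B^T$ into elementary coordinate operations, at $O(m^3)$ each. The reduction map $L$ modulo $g_n$ costs $O(m^2)$ to tabulate. Emitting the arithmetic circuits of the previous subsection costs time linear in their size, $O(m^2)$ gates each, plus $O(1)$ copies for control by the $D$ labels.

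Summing the stages gives $O(m^5)$ $\F_\ell$-operations, with Stage 1 dominant. For the bit complexity, each $\F_\ell$ operation on constant-size $\ell$ costs $O(1)$ bit operations. The only other overhead is the $\widetilde O$ factor inherited from Shoup's construction and from indexing, which gives $\widetilde O(m^5)\subseteq O(m^6)$.

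The main obstacle is bookkeeping rather than mathematics. The Stage 4 items must be checked to each stay within $O(m^4)$, and in particular the trace matrix must not be computed naively as $m^2$ separate $m$-term Frobenius sums with $O(m^2)$ multiplications each, which would cost $O(m^5)$. Even that naive cost would still fit the $O(m^5)$ bound, so the claimed bound holds with room to spare.
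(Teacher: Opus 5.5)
Your proposal is correct and follows essentially the paper's own staged argument: the field-model lemma dominates at $O(m^5)$, the lifts cost $O(m^3)$, and the Bruhat coefficients, trace functional, trace-pairing matrix, quadratic forms and Gaussian-elimination gate lists all fit in $O(m^4)$. Your multiplication-matrix route to $\Tr$ is a harmless variant of the paper's Frobenius-powers computation. The one thing you cite by label rather than supply is the explicit splitting, which the paper's proof writes down as $\iota(i)=\operatorname{diag}(z,-z)$ and $\iota(j)=\left(\begin{smallmatrix}0&1\\ \alpha-1&0\end{smallmatrix}\right)$. The paper also checks there that $\det M_{c,d}=\alpha=b^2$ and $M_{c,d}M_{-c,-d}=\alpha\mathds{1}_2$, which is where $b^{-1}M_{c,d}$ is actually verified to be an inverse-paired lift in $\SL_2(\F_p)$.
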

\begin{proof}
Apply \cref{lem:square-irreducible-moduli}. Since $m=rn$ and $r$ is fixed, this uses $O(m^5)$ operations. Write
\[
    \alpha:=t\bmod g_n,\qquad
    b:=b_n\bmod g_n,\qquad z:=z_n\bmod g_n.
\]
Then $b^2=\alpha$ and $z^2=\nu$. Write $i,j$ for the generators of Morgenstern's quaternion algebra, with $i^2=\nu$, $j^2=t-1$, and $ij=-ji$. Over $\F_p$ we represent them by
\begin{equation}\label{eq:quaternion-splitting}
    \iota(i)=\begin{pmatrix}z&0\\0&-z\end{pmatrix},\qquad
    \iota(j)=\begin{pmatrix}0&1\\\alpha-1&0\end{pmatrix}.
\end{equation}
These matrices square to $\nu\mathds{1}_2$ and $(\alpha-1)\mathds{1}_2$ and anticommute. Since $z\neq0$ and $\alpha\neq1$, the images of $1,i,j,ij$ are linearly independent, so this is an isomorphism onto $M_2(\F_p)$.

For each $(c,d)\in\mathcal T_q$, the image of $\xi_{c,d}$ is
\begin{equation}\label{eq:explicit-generator-matrices}
    M_{c,d}:=\begin{pmatrix}
        1&c+dz\\
        (\alpha-1)(c-dz)&1
    \end{pmatrix}.
\end{equation}
This agrees with Morgenstern's matrix formula after choosing $-z$ as his square root~\cite[Eq.~(14) and Theorem~4.13]{Morgenstern1991Report}. We can check the determinant and inverse pairing directly:
\[
\begin{aligned}
    \det M_{c,d}
        &=1-(\alpha-1)(c^2-\nu d^2)=\alpha=b^2,\\
    M_{c,d}M_{-c,-d}&=\alpha\mathds{1}_2.
\end{aligned}
\]
Consequently,
\begin{equation}\label{eq:uniform-lifts}
    \widetilde s_{c,d}:=b^{-1}M_{c,d}\in\SL_2(\F_p),
    \qquad \widetilde s_{-c,-d}=\widetilde s_{c,d}^{-1},
\end{equation}
and $S_p=\{[M_{c,d}]:(c,d)\in\mathcal T_q\}$. These projective matrices are distinct: their diagonal entries fix the scalar, and their off-diagonal entries recover $c,d$, since $2z(\alpha-1)\neq0$. None is scalar, since that would force $c=d=0$. Computing all $D=O(1)$ lifts uses $O(m^3)$ operations over $\F_\ell$.

The remaining data come from linear algebra. Field inversions and the coefficients in \cref{eq:bruhat-factorizations} cost $O(m^2)$ per generator. We compute the trace as a linear functional in our basis using the Frobenius matrix and its powers, in $O(m^4)$ operations. The trace-pairing matrix $B$ and the $O(1)$ quadratic forms $x\mapsto\Tr(tx^2/2)$ each have $O(m^2)$ coefficients. Once the trace is available, each coefficient costs $O(m^2)$ operations by schoolbook field arithmetic. Their total cost is $O(m^4)$.

Gaussian elimination compiles each invertible linear map into $O(m^2)$ elementary gates in $O(m^3)$ time. For a general linear map $L$, the reversible update $(x,y)\mapsto(x,y+Lx)$ uses $O(m^2)$ controlled additions. The arithmetic constructions above then give the complete gate lists. These costs are dominated by the $O(m^5)$ field construction. Since the base field is fixed, allowing for bit operations and gate indices gives $\widetilde O(m^5)$ bit complexity. The weaker bound $O(m^6)$ removes the suppressed logarithms.
\end{proof}

\begin{corollary}\label{cor:main}
For every fixed odd prime power $q$ and $D=q+1$, put
\[
    p_k:=q^{2k},\qquad N_k:=\frac{q^{2k}-1}{2}\qquad(k\geq1).
\]
The channels $\{\Phi_{p_k}\}_{k\geq1}$ form an infinite uniform family of degree-$D$ Ramanujan quantum expanders in dimensions $N_k\to\infty$. The compiler uses $O(\log^5 N_k)$ operations over a fixed finite field, or $\widetilde O(\log^5 N_k)$ bit operations. Each channel use takes $O(\log^2 N_k)$ gates on inputs already supported on the odd subspace.
\end{corollary}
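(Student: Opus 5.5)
The corollary assembles results already proved, so the plan is to specialize them to $n=2k$ and then translate parameters into $N_k$. Fix the odd prime power $q=\ell^r$ and $D=q+1$. For each $k\geq1$, set $n=2k$, which is even and at least $2$. Then $p=q^n=p_k$ and the dimension is $N=(p_k-1)/2=N_k$. Because $q\geq3$, $N_k$ is strictly increasing in $k$ and tends to infinity, so the family is infinite and the dimensions $N_k$ are distinct.

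\textbf{Ramanujan property.} Apply \cref{thm:ramanujan-quantum-expander} with $n=2k$. It shows that $\Phi_{p_k}$ is a degree-$D$ mixed-unitary channel on $\mathcal H_{p_k}^-$, of dimension $N_k$, and that
\[
    \sigma_2(\Phi_{p_k})=\left\|\Phi_{p_k}\big|_{\End_0(\mathcal H_{p_k}^-)}\right\|_{\HS\to\HS}\leq\lambda_D.
\]
By \cref{lem:spectral-form}, this is exactly the statement that $\Phi_{p_k}$ is an $(N_k,D,\lambda_D)$ quantum expander, hence Ramanujan. This holds for every $k$ with the same $D$, which gives the family claim.

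\textbf{Uniformity and costs.} Invoke \cref{thm:kraus-complexity} and \cref{lem:uniform-compiler}. They provide a single deterministic compiler with input $1^n$, whose built-in data depend only on $q$, so no $k$-dependent advice is used; this is what uniformity means here. The compiler uses $O(m^5)$ operations over $\F_\ell$, or $\widetilde O(m^5)$ bit operations. Each channel use costs $O(m^2)$ gates on inputs supported on the odd subspace. Finally, by \cref{eq:register-size},
\[
    m=rn=\log_\ell(2N_k+1)=\Theta(\log N_k),
\]
with constants depending only on $\ell$ and $r$. Substituting this gives $O(\log^5 N_k)$ field operations, $\widetilde O(\log^5 N_k)$ bit operations, and $O(\log^2 N_k)$ gates per channel use. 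The fixed finite field in the statement is $\F_\ell$.

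The only step needing care is the conversion $m=\Theta(\log N_k)$, which must be uniform in $k$. It holds because $2N_k+1=p_k=\ell^m$ exactly. There is no genuine obstacle, since all the analytic and algorithmic content sits in the earlier theorems.
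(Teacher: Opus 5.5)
Your proposal is correct and follows the paper's own (implicit) argument: specialize \cref{thm:ramanujan-quantum-expander}, \cref{thm:kraus-complexity}, and \cref{lem:uniform-compiler} to $n=2k$, then convert costs using $m=rn=\log_\ell(2N_k+1)=\Theta(\log N_k)$ from \cref{eq:register-size}. Your added checks, that $N_k\to\infty$ and that the compiler takes no $k$-dependent advice, are exactly what the corollary's ``infinite uniform family'' wording needs.
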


\DeclareUrlCommand{\Doi}{\urlstyle{sf}}
\renewcommand{\path}[1]{\small\Doi{#1}}
\renewcommand{\url}[1]{\href{#1}{\small\Doi{#1}}}
\bibliographystyle{alphaurl}
\bibliography{refs}

\end{document}